\documentclass[reprint,pra,groupedaddress,superscriptaddress,nofootinbib,aps,longbibliography]{revtex4-2}

\usepackage{url}
\usepackage{placeins}
\usepackage{pifont}
\usepackage{bm}
\usepackage{diagbox}
\usepackage{graphicx}
\usepackage{epsfig}
\usepackage{dsfont}
\usepackage{rotating}
\usepackage{amssymb,amsthm,amsfonts,amsbsy,amscd,amsmath,amsfonts}
\usepackage{mathrsfs}
\usepackage{bbm}
\usepackage{soul}
\usepackage{bbold}
\usepackage{times}
\usepackage[dvipsnames, svgnames, x11names]{xcolor}  
\usepackage{color}
\usepackage{physics}
\usepackage{framed}
\usepackage{enumitem}
\usepackage{epstopdf}
\usepackage{times}
\usepackage{mathtools}
\usepackage{latexsym}
\usepackage[colorlinks=true,linkcolor=teal,citecolor=purple]{hyperref}
\usepackage{cleveref}
\usepackage{lipsum}
\usepackage[ruled,vlined]{algorithm2e}
\usepackage[mathscr]{eucal}
\usepackage{eucal}
\usepackage{extarrows}
\usepackage{tikz}
\usetikzlibrary{quantikz}
\usepackage{comment}
\usepackage{array}
\usepackage{calc}
\makeatletter
\newcommand{\thickhline}{%
    \noalign {\ifnum 0=`}\fi \hrule height 1pt
    \futurelet \reserved@a \@xhline
}
\newcolumntype{"}{@{\hskip\tabcolsep\vrule width 1pt\hskip\tabcolsep}}
\makeatother



\newtheoremstyle{noparens}%
  {}{}%
  {\itshape}{}%
  {\bfseries}{.}%
  { }%
  {\thmname{#1}\thmnumber{ #2}\mdseries\thmnote{ #3}}
\theoremstyle{noparens}  
\newtheorem{lemma}{Lemma}
\newtheorem{theorem}{Theorem}
\newtheorem{corollary}{Corollary}

 \newtheorem{definition}{Definition}

\def\be{\begin{align}}
\def\ee{\end{align}}
\def\ba{\begin{array}}
\def\ea{\end{array}}
\def\Tr{\mathrm{Tr}}
\usepackage{hyperref}
\hypersetup{  colorlinks=true, linkcolor=blue, citecolor=red, urlcolor=blue  }

\usepackage{pifont}

\newcommand{\mH}{\mathcal{H}}
\newcommand{\mE}{\mathcal{E}}
\newcommand{\mS}{\mathcal{S}}

\newcommand{\mN}{\mathcal{N}}
\newcommand{\mZ}{\mathcal{Z}}
\newcommand{\mO}{\mathcal{O}}

\newcommand{\mI}{\mathbbm{I}}

\newcommand{\id}{\text{id}}

\usepackage{xcolor}

\definecolor{daxColor}{HTML}{900C3F}

\makeatletter
\newcommand*{\rom}[1]{\expandafter\@slowromancap\romannumeral #1@}
\makeatother

\usepackage{picinpar,graphicx}
\setcounter{MaxMatrixCols}{30}
\usepackage{array}


\makeatother

\usepackage{multirow}

\begin{document}

\title{Communication with Quantum Catalysts}

\author{Yuqi~Li}
\thanks{These authors contributed equally}
\address{College of Mathematics Science, Harbin Engineering University, Nantong Street, Harbin, 150001, Heilongjiang, People's Republic of China}

\author{Junjing~Xing}
\thanks{These authors contributed equally}
\address{College of Intelligent Systems Science and Engineering, Harbin Engineering University, Nantong Street, Harbin, 150001,
Heilongjiang, People's Republic of China}

\author{Dengke Qu}
\affiliation{School of Physics, Southeast University, Nanjing, 211189, People's Republic of China}
\affiliation{Beijing Computational Science Research Center, Beijing, 100193, People's Republic of China}
 
\author{Lei Xiao}
\affiliation{School of Physics, Southeast University, Nanjing, 211189, People's Republic of China}
 
\author{Zhaobing~Fan}
\email{fanzhaobing@hrbeu.edu.cn}
\address{College of Intelligent Systems Science and Engineering, Harbin Engineering University, Nantong Street, Harbin, 150001,
Heilongjiang, People's Republic of China}
\address{College of Mathematics Science, Harbin Engineering University, Nantong Street, Harbin, 150001, Heilongjiang, People's Republic of China}

\author{Zhu-Jun Zheng}
\email{zhengzj@scut.edu.cn}
\affiliation{Laboratory of Quantum Science and Engineering, South China University of Technology, Guangzhou, 510641, China}
\affiliation{Department of Mathematics, South China University of Technology, Guangzhou, 510641, China}

\author{Haitao~Ma}
\email{hmamath@hrbeu.edu.cn}
\address{College of Mathematics Science, Harbin Engineering University, Nantong Street, Harbin, 150001, Heilongjiang, People's Republic of China}

\author{Peng Xue}
\email{gnep.eux@gmail.com}
 \affiliation{School of Physics, Southeast University, Nanjing, 211189, People's Republic of China}
 \affiliation{Beijing Computational Science Research Center, Beijing, 100193, People's Republic of China}

\author{Kishor~Bharti}
\email{kishor.bharti1@gmail.com}
\address{A*STAR Quantum Innovation Centre (Q.InC), Institute of High Performance Computing (IHPC), Agency for Science, Technology and Research (A*STAR), 1 Fusionopolis Way, \#16-16 Connexis, Singapore 138632, Republic of Singapore}
\address{Centre for Quantum Engineering, Research and Education, TCG CREST, Sector V, Salt Lake, Kolkata 700091, India}

\author{Dax~Enshan~Koh}
\email{dax\_koh@ihpc.a-star.edu.sg}
\address{A*STAR Quantum Innovation Centre (Q.InC), Institute of High Performance Computing (IHPC), Agency for Science, Technology and Research (A*STAR), 1 Fusionopolis Way, \#16-16 Connexis, Singapore 138632, Republic of Singapore}

\author{Yunlong~Xiao}
\email{xiao\_yunlong@ihpc.a-star.edu.sg}
\address{A*STAR Quantum Innovation Centre (Q.InC), Institute of High Performance Computing (IHPC), Agency for Science, Technology and Research (A*STAR), 1 Fusionopolis Way, \#16-16 Connexis, Singapore 138632, Republic of Singapore}

\begin{abstract}
Communication is essential for advancing science and technology. Quantum communication, in particular, benefits from the use of catalysts. During the communication process, these catalysts enhance performance while remaining unchanged. Although chemical catalysts that undergo deactivation typically perform worse than those that remain unaffected, quantum catalysts, referred to as embezzling catalysts, can surprisingly outperform their non-deactivating counterparts despite experiencing slight alterations. In this work, we employ embezzling quantum catalysts to enhance the transmission of both quantum and classical information. Our results reveal that using embezzling catalysts augments the efficiency of information transmission across noisy quantum channels, ensuring a non-zero catalytic channel capacity. Furthermore, we introduce catalytic superdense coding, demonstrating how embezzling catalysts can enhance the transmission of classical information. Finally, we explore methods to reduce the dimensionality of catalysts, a step toward making quantum catalysis a practical reality.

\end{abstract}


\maketitle

\section{Introduction}
Communication serves as the cornerstone of our digital existence, playing an indispensable role in shaping the way we work, engage, and drive innovation in the dynamic landscape of today’s world. When integrated with the principles of quantum theory, information and communications technology undergoes a transformative evolution, unlocking unparalleled advantages in terms of security and efficiency. For example, by utilizing a quantum channel, we can facilitate the point-to-point transmission of quantum information~\cite{PhysRevA.55.1613,khatri2020principles}. Leveraging the phenomenon of entanglement, we can employ superdense coding to enhance our capacity for transmitting classical information twofold~\cite{PhysRevLett.69.2881}. These protocols pave the way for the creation of a hybrid-networked communication system, one that merges classical and quantum networks~\cite{Knaut2024,Liu2024} to revolutionize the information and communications technology with its potential to dramatically enhance data processing and transmission capabilities.

In practice, the performance of quantum communications through channels and entanglement-assisted classical communications is hindered by noise arising from imperfections in quantum devices and environmental decoherence. This results in a noisy quantum channel and imperfectly entangled states shared between the sender and receiver, thereby degrading the performance of transmitting both quantum and classical information. To address these effects, a novel approach -- termed catalytic quantum communication -- has been introduced. Drawing inspiration from the field of chemistry, this technique employs a `catalyst,' akin to biological enzymes, which consists of an additional entangled state pair~\cite{PhysRevLett.83.436,klimesh2007inequalities,PhysRevLett.113.150402, PhysRevLett.115.150402,PhysRevX.8.041051,PhysRevLett.121.190504,PhysRevLett.122.210402,PhysRevLett.123.020403,PhysRevLett.127.160402}.  These catalyst states become interlinked with the quantum resources shared by the sender and receiver, enhancing performance of the communication. Notably, the catalyst remains unaffected by the communication process, preserving its potential for subsequent uses.

Catalytic deactivation~\cite{FORZATTI1999165}, an unavoidable aspect of catalytic processes, arises from a multitude of factors, including poisoning, fouling, carbon deposition, thermal degradation, and sintering. While typically viewed as a negative occurrence, certain forms of deactivation, such as poisoning, can unexpectedly enhance the selectivity of a catalyst, as seen with the Lindlar catalyst~\cite{carey2007advanced}. This phenomenon invites us to consider the potential benefits of intentional catalytic deactivation within quantum information processes. Consider a scenario in catalytic communication where we permit the catalytic system to undergo slight alterations following its interaction with a noisy quantum channel or entangled state. This approach, termed ‘embezzling catalyst,’ poses an intriguing inquiry: could such a strategy lead to advantageous outcomes for transmitting both quantum and classical information?

In this work, we address the question by enhancing the transmission performance for both quantum and classical information through the development of embezzling catalysts. Specifically, for quantum communication, we explore the catalytic channel capacity of a noisy quantum channel, showcasing the efficacy of embezzling catalysts through our findings. For instance, we demonstrate that conventional quantum catalysts yield a catalytic channel capacity of zero for two copies of dephasing channels. In contrast, the integration of embezzling catalysts elevates this capacity to a non-zero value. In the context of long-distance entanglement distribution, we also illustrate how embezzling catalysts outperform conventional quantum catalysts. When it comes to transmitting classical information, we introduce the catalytic superdense coding, marking an improvement over traditional superdense coding without catalysts. Moreover, we propose strategies to minimize the dimensionality of embezzling catalysts, thus making quantum catalysis more practical for real-world applications. Our research paves the way for future explorations into the universal applicability and dimensional constraints of quantum catalysis in communications, potentially advancing the field.

\section{Preliminaries}\label{sec:prelims}

In this section, we will explore the foundational concepts and various classifications of catalysts. Additionally, we will present methodologies for quantifying the distinguishability of states within communication-related tasks.


\subsection{Quantum Catalysts}
Just as catalysts facilitate chemical reactions and enzymes accelerate biological processes, quantum catalysts enable certain transformations that are otherwise unattainable. Specifically, within entanglement theory, there exist pairs of bipartite states, denoted as $\rho$ and $\sigma$, between which transformation under local operations and classical communication (LOCC) is impossible; that is
\begin{align}\label{eq:ent-no-locc}
\rho\stackrel{\text{LOCC}}{\nrightarrow}\sigma
\quad
\text{and}
\quad
\rho\stackrel{\text{LOCC}}{\nleftarrow}\sigma.
\end{align}
Yet, the introduction of an ancillary state $\tau$, serving as a catalyst, renders this transformation possible.
\begin{align}\label{eq:ent-cata-locc}
    \rho\otimes\tau
    \stackrel{\text{LOCC}}{\rightarrow}
    \sigma\otimes\tau.
\end{align}
Upon completion of this catalytic process, the system under consideration, $\sigma$, and the catalyst, $\tau$, remain uncorrelated, with the catalyst retaining its original state. This type of catalyst is termed an `{\it exact catalyst}.' Should the aforementioned constraints not be satisfied, the catalyst is referred to as an `{\it approximate catalyst}.' The latter is pivotal in quantum communication tasks, including teleportation and asymptotic entanglement transformation. The formal definition of an approximate catalyst is as follows. 

\begin{definition}
[{\bf (Approximate Catalysts)}]
\label{def:AC}
A state transition from $\rho_A$ to $\sigma_A$ on system $A$ is considered approximately catalytic with respect to a set of free operations $\mO$ if, for a given catalytic system $C$ in state $\tau_C$, there exists a free operation $\Lambda\in \mO$ such that
\begin{align}\label{eq:D-state}
    D(\Lambda(\rho_A\otimes \tau_C), \sigma_A\otimes\tau_C)
    \leqslant 
    \varepsilon, 
\end{align}
and
\begin{align}\label{eq:D-catalyst}
    D(\Tr_A[\Lambda(\rho_A\otimes \tau_C)], \tau_C)
    \leqslant 
    \delta.
\end{align}
In Eqs.~\eqref{eq:D-state} and~\eqref{eq:D-catalyst}, $\varepsilon > 0$ and $\delta \geqslant 0$ serve as the smoothing parameters that define the degree of approximation permissible. The function $D$ represents a metric for assessing the distance between quantum states. The state $\tau_C$ is designated as an approximate catalyst because it facilitates the transition between states $\rho_A$ and $\sigma_A$ while undergoing only slight changes, constrained by the smoothing parameters.
\end{definition}

The term used to describe a catalyst varies based on the value of $\delta$. For small $\delta>0$, the term {\it embezzling catalyst} is applied, signifying that the system derives advantages, albeit at a marginal expense to $\tau_C$. In the case where $\delta=0$, it is termed a {\it correlated catalyst}, denoting its ability to aid in the transformation while remaining correlated with the system. For an in-depth understanding, readers are encouraged to consult the extensive literature reviews available on this subject~\cite{Datta_2023,lipkabartosik2023catalysis}. We present a detailed taxonomy that classifies catalysts in quantum information theory, as outlined in TABLE~\ref{tab:cat-classification}.

\begin{table}[b]
    \centering
    \includegraphics[width=0.48\textwidth]{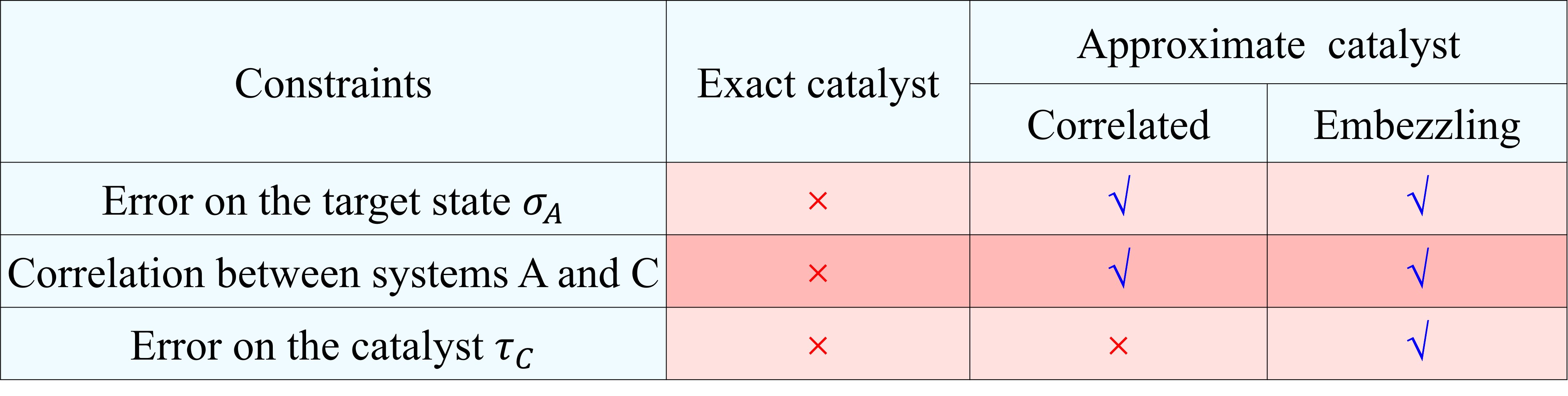}
    \caption{{ Classification of catalysts.} 
    An exact catalyst is a quantum state that enables a transformation that would otherwise be impossible, without itself changing or becoming entangled with other systems. Consider two quantum states, $\rho_A$ and $\sigma_A$, and a set of permissible operations $\mO$. If a direct transformation from $\rho_A$ to $\sigma_A$ is not feasible under $\mO$, namely $\Lambda(\rho_A)\neq\sigma_A$ for any free operation $\Lambda\in\mO$ (see Eq.~\eqref{eq:ent-no-locc}), the role of an exact catalyst becomes evident. Should there exist a state $\tau_C$ that allows the combined system $\rho_A \otimes \tau_C$ to transform into $\sigma_A \otimes \tau_C$ via $\mathcal{O}$ (see Eq.~\eqref{eq:ent-cata-locc}), then $\tau_C$ serves as an exact catalyst for the transition from $\rho_A$ to $\sigma_A$~\cite{PhysRevLett.83.3566, PhysRevA.64.042314, PhysRevA.71.042319,Turgut_2007, Marvian_2013, doi:10.1038/ncomms7383, PhysRevA.93.042326}. We can broaden the criteria for exact catalysts to establish a more encompassing concept of approximate catalysts, as delineated in Def.~\ref{def:AC}. These approximate catalysts fall into two categories: correlated and embezzling. Correlated catalysts operate without consuming the catalyst at all~\cite{PhysRevLett.126.150502,PhysRevLett.127.150503,PhysRevLett.127.260402,Yadin_2022,PhysRevLett.129.120506,Wilming2022correlationsin,PhysRevA.107.012404,PhysRevLett.130.040401,PhysRevLett.130.240204,ganardi2023catalytic,Datta2024entanglement}, while embezzling catalysts~\cite{PhysRevA.67.060302,PhysRevLett.111.250404,doi:10.1073/pnas.1411728112,doi:10.1073/pnas.1411728112,doi:10.1007/s00037-015-0098-3,Ng_2015,7377103,10.1063/1.4938052,PhysRevLett.118.080503,PhysRevA.100.042323,PhysRevX.11.011061,doi:10.1038/s41534-022-00608-1,Luijk2023covariantcatalysis,PRXQuantum.4.040330,Zanoni2024complete,vanluijk2024embezzling,vanluijk2024embezzlement} are notable for their slight consumption of the catalyst during use.
    }
    \label{tab:cat-classification}
\end{table}


\subsection{Distinguishability Measures}
Utilizing a suitable distinguishability measure is crucial for understanding the evolution of states under free transformation and the dynamics of catalysts in quantum information processing. Here, we will explore a variety of distinguishability measures essential to our analysis. The foremost among these is the {\it Uhlmann’s fidelity}~\cite{doi:10.1080/09500349414552171}. For states $\rho$ and $\sigma$ acting on the same system, the Uhlmann’s fidelity $F_U$ between them is given by
\begin{align}\label{eq:uhl}
	 F_U(\rho,\sigma) 
      :=
      \left\|\sqrt{\rho}\sqrt{\sigma}\right\|_{1}^{2}
      =
      \left(\text{Tr}\Big[
      \sqrt{\sqrt{\sigma}\rho\sqrt{\sigma}}
      \Big]\right)^{2}.
\end{align}
For the special case of a pure state, where $\sigma= \ketbra{\psi}$, the aforementioned Eq.~\eqref{eq:uhl} simplifies to
\begin{align}
F_{U}(\rho, \ketbra{\psi}{\psi})=\Tr[\rho\cdot \ketbra{\psi}{\psi}] = \bra\psi \rho\ket\psi.
\end{align}
While the Uhlmann fidelity does not fulfill the criteria of a mathematical distance due to its value of one for identical states, it gives rise to a metric known as the {\it purified distance}~\cite{tomamichel2013framework}, which is also referred to as the {\it sine distance}~\cite{rastegin2006sine}. Formally, the purified distance is defined by
\begin{align}\label{eq:pur}
P(\rho, \sigma)
:= 
\sqrt{1-F_{U} (\rho , \sigma )},
\end{align}
which meets all the conditions required of a mathematical distance, including the triangle inequality. Specifically, for any quantum states $\rho$, $\sigma$, and $\tau$, we have
\begin{align}\label{eq:pur-tri}
    P(\rho, \sigma)
    \leqslant  
    P(\rho, \tau)+P(\tau, \sigma).
\end{align}

An alternative measure for the distinguishability between states is the quantum max-relative entropy $D_{\max}$, which offers a valuable substitute for the conventional quantum relative entropy in single-shot scenarios of quantum information processing tasks. For states $\rho$ and $\sigma$, given that the support of $\rho$ is included in the support of $\sigma$, namely ${\rm supp} (\rho) \subseteq {\rm supp} (\sigma)$, the max-relative entropy of $\rho$ with respect to $\sigma$ is defined as follows
\begin{align}
D_{\max}(\rho\,\|\,\sigma)
:=
\inf\{\lambda: \rho\leqslant2^{\lambda}\sigma\}.
\end{align}
It is also noteworthy that the max-relative entropy can be calculated via a semidefinite programming (SDP). In particular, the max-relative entropy corresponds to the logarithm of the optimal value obtained from the SDP
\begin{align}\label{eq:D-max}
    D_{\max}(\rho\,\|\,\sigma)
    =
    \log_2 \max_{M\geqslant 0}\{\Tr[M\rho]: \Tr[M\sigma]\leqslant 1\}.
\end{align}
Considering that SDPs can be effectively solved using {\it interior point methods}, as detailed in Refs.~\cite{KHACHIYAN198053,doi:10.1137/1038003,Boyd_Vandenberghe_2004}, a vast array of SDP applications~\cite{xiao2019complementary,PhysRevResearch.3.023077,10.1088/978-0-7503-3343-6,PhysRevLett.130.240201,Yuan2023}, including the computation of max-relative entropy, can typically be executed with considerable efficiency in practical scenarios.

\section{Catalytic Quantum Information Transmission}\label{sec:cqc}

\begin{table}[t]
    \centering
    \includegraphics[width=0.48\textwidth]{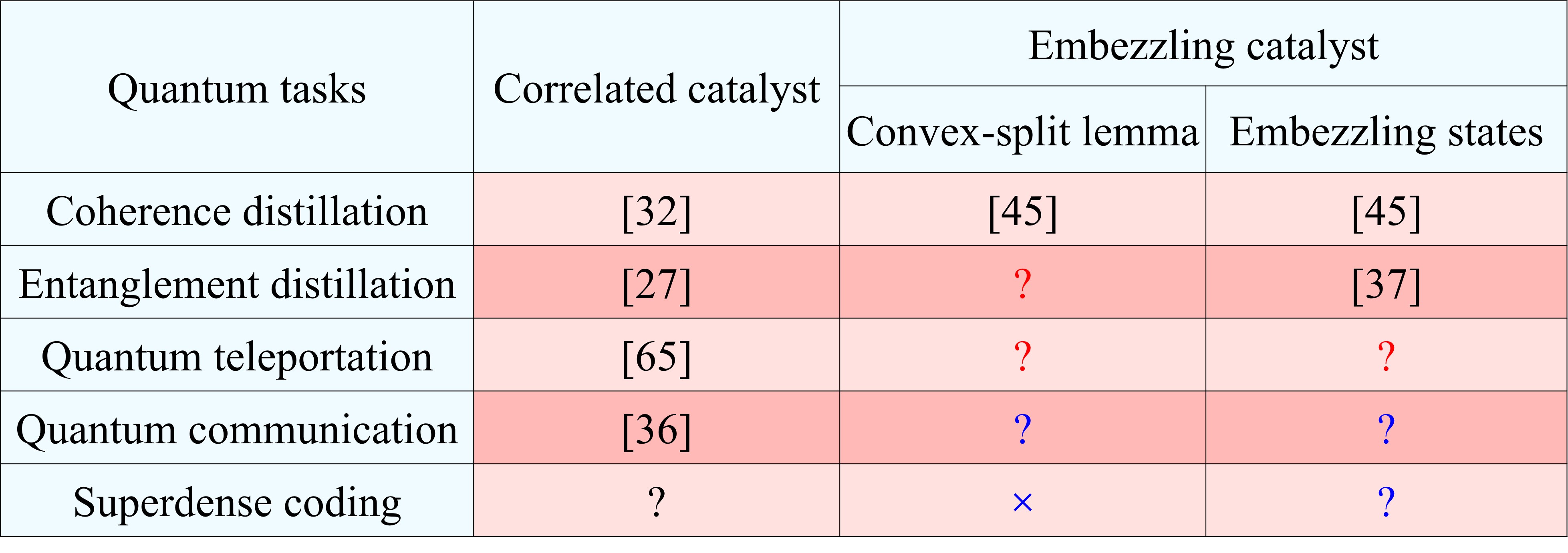}
    \caption{{Application scope of catalysts.} The blue question marks denote the tasks that will be explored and addressed in this work, while the red question marks represent tasks planned for investigation and resolution in our accompanying paper~\cite{CP}. The cross signifies that the convex-split-lemma-assisted protocol does not work for superdense coding, and the black question mark indicates that it remains unknown whether correlated catalysts exist for superdense coding.}
    \label{tab:cat-application}
\end{table}

Quantum communication harnesses the principles of quantum physics to transmit information encoded in qudits. The capacity of a noisy quantum channel to convey information is a key concept in quantum information theory, delineating the maximum feasible communication rates. Here, we focus on scenarios involving additional catalysts (see TABLE~\ref{tab:cat-application}). More specifically, we investigate the ability to generate entanglement through a single use of a noisy quantum channel, represented by $\mN$, with the assistance of catalysts distributed between the sender, Alice, and the receiver, Bob.

The schematic diagram of the protocol for catalytic quantum communication is illustrated in Fig.~\ref{fig:cqc}. This approach diverges from conventional methods by allowing classical communication between Alice and Bob, in addition to the use of a noisy quantum channel $\mN$. Specifically, Alice starts with two particles in the state $\ket{00}_{AA'}$ and shares a catalyst $\tau_{CC'}$, with Bob. Before utilizing the noisy quantum channel $\mN$, they are free to perform any LOCC operations as pre-processing. Following the use of the channel $\mN$, Alice and Bob can engage in LOCC post-processing to derive an output state $\mu_{ABCC'}$, which closely resembles the tensor product of a maximally entangled state $\phi^+$ acting on systems $AB$ and the catalyst $\tau$ acting on systems $CC'$. The catalytic channel capacity is defined by the maximum number of ebits that can be generated through this process.

\begin{figure}[t] 
	\begin{center}
		\includegraphics[width=0.48\textwidth]{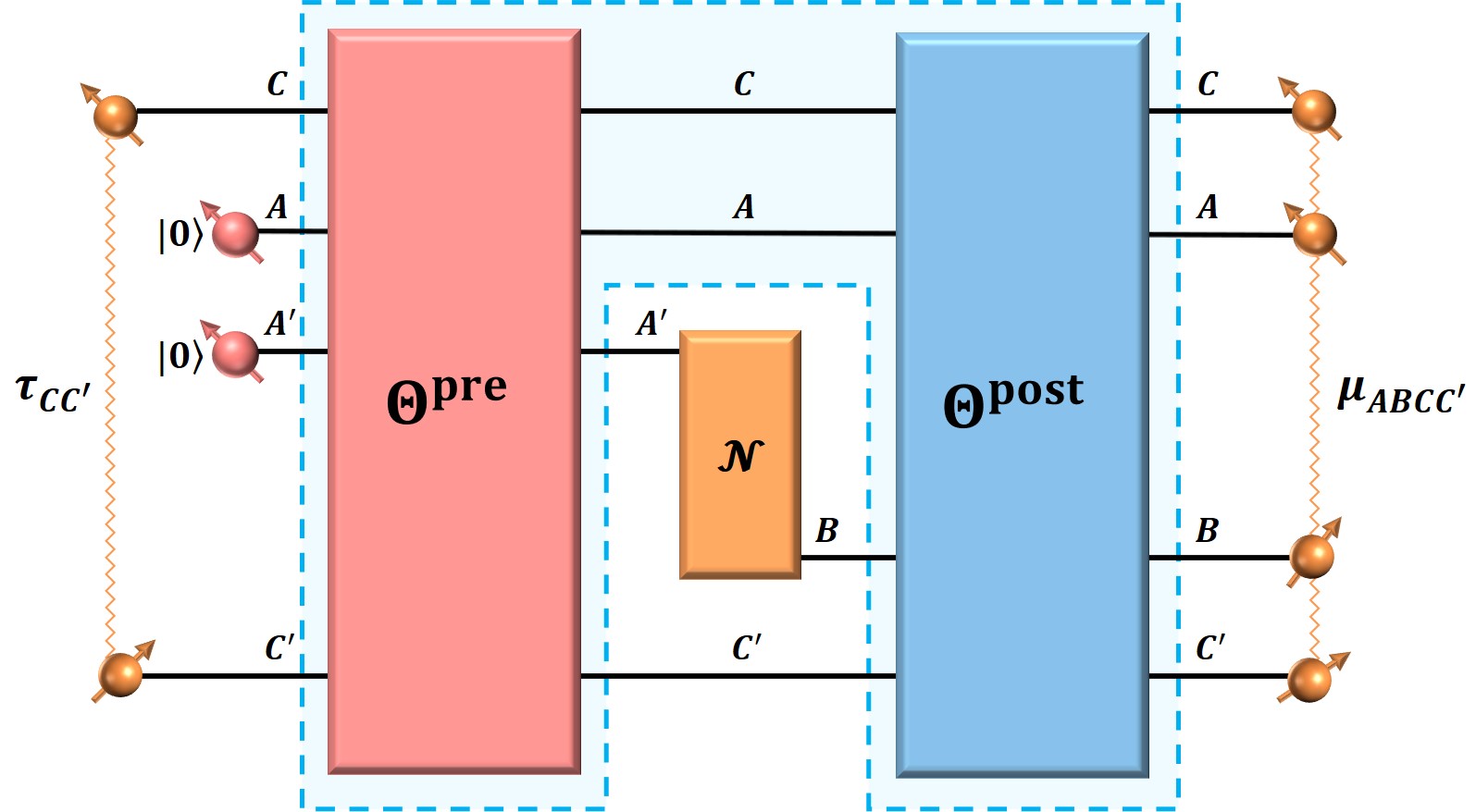}
	\end{center} 
	\caption{{Catalytic quantum communication.} 
    A noisy quantum channel $\mN$ establishes a connection between the sender, Alice, and the receiver, Bob. To facilitate the generation of shared entanglement via the channel, Alice and Bob utilize a bipartite state $\tau_{CC'}$ as a catalyst. They implement LOCC operations as pre-processing $\Theta^{\text{Pre}}$ and post-processing $\Theta^{\text{Post}}$ steps, respectively, before and after the operation of the noisy channel $\mN$. To maintain consistency with the original work on catalysts for noisy quantum channels~\cite{Datta2024entanglement}, we assume that Alice will prepare states initialized in $\ket{00}_{AA'}$. However, it is important to note that since implementing LOCC is cost-free in this protocol, the preparation can be integrated into the pre-processing operation as $\Theta^{\text{Pre}}:= \mE\circ\ket{00}_{AA'}\in\text{LOCC}$ holds for any $\mE\in\text{LOCC}(AC:A'C')$. The state $\mu_{ABCC'}$ produced by catalytic quantum communication will approximate the tensor product of a maximally entangled state $\phi^+_d$ and the catalyst $\tau_{CC'}$, that is, $\mu_{ABCC'}\approx\phi^+_{d, AB}\otimes\tau_{CC'}$. In this context, the composition of LOCC pre-processing $\Theta^{\text{Pre}}$ and post-processing $\Theta^{\text{Post}}$ forms an LOCC superchannel.}
     \label{fig:cqc}
\end{figure}

\begin{definition}
[{\bf (Catalytic Channel Capacity)}]
\label{def:CCC}
Given a noisy quantum channel $\mN$ and a non-negative error threshold $\epsilon$, the single-shot $\epsilon$-error catalytic channel capacity of $\mN$, denoted by $Q_c^{\epsilon}(\mN)$, is the supremum of $\log d$ -- the number of qubits that can be reliably transmitted -- across all catalytic LOCC superchannels $\Theta^{\rm{Post}}\circ\Theta^{\rm{Pre}}\circ\tau$, as depicted in Fig.~\ref{fig:cqc}; that is
\begin{align} 
\label{eq:capacity}
Q_c^{\epsilon}(\mN)
:=
\sup
\,
&\log d
\notag\\
\rm{s.t.}
\,\,
&\,\,
p_{\rm{err}}^c(\mu)
:=
1-
F_U(\mu, \phi^+_{d}\otimes\tau)\leqslant\epsilon,\notag\\
&\,\,
\mu:=
\Theta^{\rm{Post}}
\circ
(\rm{id}\otimes\mN)
\circ
\Theta^{\rm{Pre}}
(\ketbra{00}\otimes\tau),\notag\\
&\,\,
\Theta^{\rm{Pre}},\,\Theta^{\rm{Post}}\in\rm{LOCC}.
\end{align}
Since the systems associated with each state and quantum channel are clearly indicated in our Fig.~\ref{fig:cqc}, we have omitted them here to avoid redundancy and maintain clarity. If the catalyst system is not utilized, the state resulting from the noisy quantum channel under LOCC superchannel is denoted as $\sigma_{AB}$, or simply $\sigma$. In this setting, we evaluate the efficacy of $\mN$ in generating entanglement by the following measure
\begin{align}
\label{eq:err}
p_{\rm{err}}(\sigma)= 1-F_U(\sigma, \phi^+_{d}).
\end{align}
\end{definition}

Above definition broadens the scope of catalytic channel capacity introduced in Ref.~\cite{Datta2024entanglement}. It incorporates the potential for errors in quantum information processing, thereby allowing for a wider variety of catalysts, such as embezzling states. When the dimension of the catalyst tends towards infinity and error $\epsilon$ approaches zero, $Q_c^{\epsilon}$ converges to the definition explored in Ref.~\cite{Datta2024entanglement}. 

In scenarios involving a correlated catalyst, where the marginal of the joint system remains unchanged under catalytic quantum information processing, as shown in Fig.~\ref{fig:cqc}, certain noisy quantum channels may exhibit zero catalytic channel capacity, even when an infinite-dimensional catalyst is used. Take, for instance, the dephasing channel 
\begin{align}\label{eq:dephasing}
    \mZ_p(\cdot)
    := 
    p\,\id(\cdot)+ (1-p)\,\sigma_{z}(\cdot)\sigma_{z}.
\end{align}
Specifically, when $p<0.817$, the catalytic channel capacity $Q_c^{0}(\mZ_p^{\otimes2})$ is zero (see Ref.~\cite{Datta2024entanglement}). However, by employing an embezzling catalyst, a non-zero catalytic channel capacity can be achieved for $\mZ_p^{\otimes2}$ even when $p<0.817$, highlighting the advantage of utilizing embezzling catalysts in quantum communications. The formal results for the general catalytic channel capacity are as follows.

\begin{theorem}
[{\bf (Non-Zero Catalytic Channel Capacity)}]
\label{thm:cqc}
For any noisy quantum channel $\mN$ and a given positive error threshold $\epsilon >0$, it is always possible to identify quantum catalysts with finite dimensions that ensure the single-shot $\epsilon$-error catalytic channel capacity is non-zero, namely
\begin{align}
         Q_c^{\epsilon}(\mN)>0.
\end{align}
\end{theorem}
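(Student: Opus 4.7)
The strategy is to leverage the embezzling property of van Dam--Hayden states so that Alice and Bob generate a maximally entangled pair essentially without relying on the noisy channel $\mN$. Since Def.~\ref{def:CCC} grants arbitrary LOCC pre- and post-processing together with a shared catalyst $\tau_{CC'}$, and $\mN$ need only be invoked once (its output being discardable), the plan is to let Alice feed a fixed input state into $\mN$, have Bob locally overwrite the resulting system $B$ with a fixed pure state, and then invoke the embezzlement capability of $\tau_{CC'}$ during post-processing to establish entanglement between $A$ and $B$.

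Concretely, I would fix the target dimension $d = 2$ so that $\log d = 1 > 0$, and choose $\tau_{CC'}$ to be the van Dam--Hayden embezzling state
\begin{align}
    \ket{\tau}_{CC'} = \frac{1}{\sqrt{H_n}}\sum_{j=1}^{n}\frac{1}{\sqrt{j}}\,\ket{j}_{C}\ket{j}_{C'},
\end{align}
where $H_n = \sum_{j=1}^{n} 1/j$ is the $n$-th harmonic number. Standard embezzlement estimates guarantee that, for any prescribed tolerance $\eta > 0$, there exist a finite $n$ and an LOCC map $\Lambda$ across the cut $AC : BC'$ with
\begin{align}
    P\bigl(\Lambda(\ketbra{0}_{A}\otimes\ketbra{0}_{B}\otimes\tau_{CC'}),\,\phi^+_{2,AB}\otimes\tau_{CC'}\bigr) \leqslant \eta.
\end{align}
The full protocol then reads: $\Theta^{\text{Pre}}$ acts as the identity; after $\id\otimes\mN$ is applied to $\ketbra{00}_{AA'}\otimes\tau_{CC'}$, Bob's local post-processing resets $B$ to $\ket{0}_{B}$; Alice and Bob then jointly execute $\Lambda$. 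Each step is LOCC, so the overall composition fits the framework of Def.~\ref{def:CCC}.

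Using $p_{\rm{err}}^{c}(\mu) = 1 - F_U(\mu, \phi^+_{2}\otimes\tau) = P(\mu, \phi^+_{2}\otimes\tau)^{2}$ from Eq.~\eqref{eq:pur}, it then suffices to take $\eta = \sqrt{\epsilon}$, which yields $p_{\rm{err}}^{c}(\mu) \leqslant \epsilon$ and hence $Q_c^{\epsilon}(\mN) \geqslant \log 2 > 0$. Crucially, no property of $\mN$ enters the argument, so the bound is uniform over every noisy channel. The only technical point is to secure the embezzlement bound with an \emph{explicit finite} catalyst dimension, which is standard: the original van Dam--Hayden analysis yields an $O(1/\log n)$ trace-distance error that transfers to the purified distance via Fuchs--van de Graaf. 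I do not foresee a serious obstacle here; the main care is merely the bookkeeping that the registers $CC'$ are indeed approximately preserved, which is precisely the content of the embezzlement result being invoked.
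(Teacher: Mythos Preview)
Your proposal is correct and coincides with one of the two proofs the paper supplies. The paper establishes Thm.~\ref{thm:cqc} first via the convex-split lemma (Thm.~\ref{thm:csl}): it builds a catalyst $\tau^{CS}=\tau^{\otimes(n-1)}$ with $\tau$ close to $\phi_m^+$ and full rank, uses a random LOCC swap to mix the Choi state $\rho^{\mN}$ into the copies, and invokes Lem.~\ref{lem:CS} together with the triangle inequality for the purified distance to bound $p_{\rm err}^c$. It then presents a second, independent proof via Thm.~\ref{Thm:es-ccc}, which is precisely your route: discard whatever $\mN$ outputs, reset $AB$ to a fixed product state, and apply the van~Dam--Hayden local unitaries $U_{AC}\otimes U_{BC'}$ to embezzle $\phi_m^+$ out of $\tau^E$, yielding $Q_c^\epsilon(\mN)\geqslant\log m$ once $M\geqslant\lceil m^{(1-\sqrt{1-\epsilon})^{-1}}\rceil$.

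The trade-off between the two approaches is exactly what the paper later quantifies (Fig.~\ref{fig:06}): your embezzling-state argument is cleaner and needs only local operations (no classical communication in post-processing), and achieves a given $\epsilon$ with a smaller catalyst dimension; the convex-split route, while dimensionally heavier, perturbs the catalyst less for a given catalyst size and generalizes more readily to settings where one wishes to retain some dependence on $\mN$. One minor remark: your last sentence about ``bookkeeping that the registers $CC'$ are approximately preserved'' is superfluous for Thm.~\ref{thm:cqc} as stated, since Def.~\ref{def:CCC} imposes only the joint-fidelity constraint $p_{\rm err}^c(\mu)\leqslant\epsilon$, which already controls the marginal on $CC'$ via data processing.
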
 

Directly computing the catalytic channel capacity $Q_c^{\epsilon}(\mN)$ of a noisy quantum channel $\mN$ poses significant challenges due to the intricate nature of LOCC, whose mathematical structure resists straightforward characterization. This complexity extends to optimizations involving LOCC. The introduction of quantum catalysts adds another layer of difficulty. To overcome these obstacles, we consider a specific pre-processing operation $\Theta^{\rm{Pre}}$ that prepares a maximally entangled state $\phi^+_{AA^{'}}$. We then construct the Choi-Jamio\l kowski state $\rho_{AB}^{\mN}$ of channel $\mN$ by sending one half of the maximally entangled state through the noisy channel $\mN$, as follows
\begin{align}\label{eq:choi}
    \rho_{AB}^{\mN}:=
    (\rm{id}_A\otimes\mN_{A^{'} \to B})
    \phi^+_{AA^{'}}.
\end{align} 
We introduce several specific protocols that provide computable bounds, effectively establishing lower bounds for the catalytic channel capacity. In numerical experiments, we use various embezzling catalysts, denoted as $\tau$, and consider the corresponding LOCC operation $\mE_{\tau}: ABCC^{'}\to ABCC^{'}$ in post-processing. Under this operation, we can derive a state for the system of interest, i.e.,
\begin{align}\label{eq:cata}
    \rho_{AB}^{\mN, \text{Cat}}:=
    \Tr_{CC^{'}}\left[
    \mE_{\tau}
    \left(
    \rho_{AB}^{\mN}
    \otimes
    \tau_{CC^{'}}
    \right)\right].
\end{align}
with higher entanglement fidelity, or equivalently, reduced error. Here, the map $\mE_{\tau}$ varies depending on the choice of $\tau$. We will investigate the decrease in error achieved by using catalysts in communication and use this to benchmark the performance of catalytic protocols. Mathematically, the reduction in error is characterized by
\begin{align}\label{eq:delta-p-err}
    \Delta p_{\text{err}}
    :=\,\,
    &p_{\rm{err}}(\rho_{AB}^{\mN})- 
    p_{\rm{err}}(\rho_{AB}^{\mN, \text{Cat}})\notag\\
    =\,\,
    &
    F_U(\rho_{AB}^{\mN, \text{Cat}}, \phi^+_{d})-
    F_U(\rho_{AB}^{\mN}, \phi^+_{d})\notag\\
    \geqslant\,\,
    &
    F_U(\mE_{\tau}
    \left(
    \rho_{AB}^{\mN}
    \otimes
    \tau_{CC^{'}}
    \right), \phi^+_{d}\otimes\tau)-
    F_U(\rho_{AB}^{\mN}, \phi^+_{d}).
\end{align}
More insights and detailed explanations will be elaborated in the subsequent sections.

\subsection{Convex-Split-Lemma-Assisted Quantum Communication}
Our first catalytic quantum communication protocol integrates the {\it convex-split lemma} with the framework of quantum superchannels. Originating in Ref.~\cite{PhysRevLett.119.120506} to analyze communication costs, the convex-split lemma has evolved into an indispensable tool for a wide range of tasks in quantum information processing~\cite{8123869,8234697,8399830}. Its applications encompass various areas, including catalytic decoupling~\cite{PhysRevLett.118.080503}, quantum resource theories~\cite{PhysRevX.11.011061,PhysRevA.100.042323}, and single-shot quantum communication~\cite{doi:10.1038/s41534-022-00608-1}. We now proceed to examine its formulation in detail.

\begin{lemma}
[{\bf(Convex-split lemma~\cite{PhysRevLett.119.120506})}]
\label{lem:CS}
Let $\rho$ and $\tau$ represent quantum states such that the support of $\rho$ is contained within the support of $\tau$, i.e., ${\rm supp} (\rho) \subseteq {\rm supp} (\tau)$. Define $k$ as the max-relative entropy of $\rho$ with respect to $\tau$, namely $k:= D_{\rm{max}}(\rho||\tau)$. Now, consider the following quantum state
\begin{align}\label{eq:U}
	\mu
        = 
        \frac{1}{n}\sum_{i=1}^n
        \tau_{1}\otimes\cdots\otimes
        \rho_{i}\otimes
        \cdots\otimes\tau_{n},
\end{align}
where $\tau_i=\tau$ and $\rho_i=\rho$ for any $i$ within the set $[n]:=\{1, \ldots, n\}$. Then, the purified distance (see Eq.~\eqref{eq:pur}) between $u$ and $n$ copies of $\tau$ is upper bounded by
\begin{align}\label{eq:rho-tau}
         P(\mu,\tau^{\otimes n})
          \leqslant
         \sqrt{\frac{2^k}{n}}.
\end{align}
\end{lemma}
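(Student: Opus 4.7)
The plan is to prove the bound by constructing tailored purifications of $\mu$ and $\tau^{\otimes n}$, applying Uhlmann's theorem, and reducing the resulting fidelity estimate to an elementary scalar moment inequality for a binomial random variable.

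First, I would use $k=D_{\max}(\rho\|\tau)$, which is equivalent to the operator inequality $\rho\leqslant 2^k\tau$, to write the decomposition $\tau=q\rho+(1-q)\omega$ with $q:=2^{-k}$ and $\omega:=(\tau-q\rho)/(1-q)$ a bona fide density operator (the case $\rho=\tau$, where $k=0$, is trivial). Picking purifications $\ket{\rho}_{AB}$ and $\ket{\omega}_{AB}$ and adjoining a two-dimensional flag register $C$ gives $\ket{\tau}_{ABC}=\sqrt{q}\,\ket{\rho}_{AB}\ket{0}_C+\sqrt{1-q}\,\ket{\omega}_{AB}\ket{1}_C$, and hence $\ket{\tau}^{\otimes n}$ purifies $\tau^{\otimes n}$ on $(ABC)^n$.

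Next, using an $n$-dimensional auxiliary register $R$, I would purify $\mu$ as
\begin{equation*}
\ket{\mu}_{R(ABC)^n}=\frac{1}{\sqrt{n}}\sum_{i=1}^n\ket{i}_R\otimes\ket{\rho}_{A_iB_i}\ket{0}_{C_i}\bigotimes_{j\neq i}\ket{\tau}_{A_jB_jC_j}.
\end{equation*}
Expanding each $\ket{\tau}_{A_jB_jC_j}$ via the flag decomposition and swapping the order of summation rewrites this as a sum over branch-strings $s\in\{0,1\}^n$ with $S(s):=\{j:s_j=0\}\neq\emptyset$, in which the register $R$ appears in the uniform superposition $\ket{\chi_S}:=|S|^{-1/2}\sum_{i\in S}\ket{i}$ weighted by the coefficient $\sqrt{|S(s)|\,q^{|S(s)|-1}(1-q)^{n-|S(s)|}/n}$. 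A generic purification of $\tau^{\otimes n}$ on the same space has the analogous structure $\ket{\Psi'}=\sum_s\sqrt{q^{|S(s)|}(1-q)^{n-|S(s)|}}\,\ket{\phi_s}_R\ket{\psi_s}_{(AB)^n}\ket{s}_{C^n}$ with arbitrary unit vectors $\ket{\phi_s}_R$, where the branch states $\ket{\psi_s}$ are the tensor products of $\ket{\rho}$ and $\ket{\omega}$ dictated by $s$.

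By Uhlmann's theorem $F_U(\mu,\tau^{\otimes n})\geqslant|\braket{\Psi'}{\mu}|^2$, and choosing $\ket{\phi_s}=\ket{\chi_{S(s)}}$ collapses the overlap to the scalar $\braket{\Psi'}{\mu}=\mathbb{E}[\sqrt{Y}]$, where $Y:=K/(nq)$ with $K\sim\mathrm{Binomial}(n,q)$, so that $\mathbb{E}[Y]=1$ and $\mathrm{Var}[Y]=(1-q)/(nq)$. Applying the elementary pointwise inequality $\sqrt{1+u}\geqslant 1+u/2-u^2/2$, valid for all $u\geqslant -1$, to the nonnegative random variable $Y$ gives $\mathbb{E}[\sqrt{Y}]\geqslant 1-\mathrm{Var}[Y]/2$; squaring yields $F_U(\mu,\tau^{\otimes n})\geqslant 1-\mathrm{Var}[Y]=1-(2^k-1)/n$. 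Therefore $P(\mu,\tau^{\otimes n})^2=1-F_U(\mu,\tau^{\otimes n})\leqslant(2^k-1)/n\leqslant 2^k/n$, which is the claimed bound. The main obstacle I anticipate is the combinatorial bookkeeping when expanding $\ket{\mu}$ after substituting the flag decomposition inside every tensor factor---reorganizing indices so that the coefficient $\sqrt{|S|/n}$ and the optimal register state $\ket{\chi_{S(s)}}$ appear cleanly---while the subsequent invocation of Uhlmann's theorem and the final scalar inequality are routine.
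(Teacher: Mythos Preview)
The paper does not prove Lemma~\ref{lem:CS}; it is quoted from the original reference~\cite{PhysRevLett.119.120506} and only \emph{applied} in the proof of Theorem~\ref{thm:csl}. There is therefore no ``paper's own proof'' to compare against.

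That said, your argument is correct. The decomposition $\tau=q\rho+(1-q)\omega$ with $q=2^{-k}$, the flag-register purification, and the rewriting of $\ket{\mu}$ as a superposition indexed by bit-strings $s$ with amplitude $\sqrt{|S(s)|\,q^{|S(s)|-1}(1-q)^{n-|S(s)|}/n}$ are all valid; the controlled isometry $\ket{s}_{C^n}\mapsto\ket{\chi_{S(s)}}_R\ket{s}_{C^n}$ indeed produces a genuine purification $\ket{\Psi'}$ of $\tau^{\otimes n}$, so Uhlmann's theorem gives $F_U\geqslant|\braket{\Psi'}{\mu}|^2$. The overlap then collapses exactly to $\mathbb{E}[\sqrt{K/(nq)}]$ with $K\sim\mathrm{Binomial}(n,q)$, and your pointwise inequality $\sqrt{1+u}\geqslant 1+u/2-u^2/2$ for $u\geqslant-1$ (which one can check has equality only at $u\in\{-1,0\}$) yields $\mathbb{E}[\sqrt{Y}]\geqslant 1-\mathrm{Var}[Y]/2$ and hence $1-F_U\leqslant(2^k-1)/n$. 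One small point worth making explicit: squaring $\mathbb{E}[\sqrt{Y}]\geqslant 1-\mathrm{Var}[Y]/2$ is only immediate when the right-hand side is nonnegative, but in the complementary regime $(2^k-1)/n\geqslant 2$ the claimed bound $P\leqslant\sqrt{2^k/n}$ is trivial since $P\leqslant 1$.

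Your route differs in flavour from the original proof of Anshu, Devabathini, and Jain, which proceeds via a direct relative-entropy computation rather than an explicit Uhlmann construction. Your purification-plus-binomial argument is more hands-on and has the pleasant feature of yielding the slightly sharper constant $(2^k-1)/n$ before relaxing to $2^k/n$.
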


The above lemma suggests that when the max-relative entropy of 
$\rho$ with respect to $\tau$ is fixed, employing more copies of 
$\tau$ in constructing $\mu$ will result in $\mu$ and $\tau^{\otimes n}$ becoming less distinguishable. A diagram illustrating the convex split lemma is provided in Fig.~\ref{fig:CS-channel}.
Next, we will utilize the convex-split lemma to introduce an enhanced version of Thm.~\ref{thm:cqc}, which subsequently establishes Thm.~\ref{thm:cqc} as an immediate corollary. Our proof reveals the intricate interplay between the catalytic channel capacity and the dimensions of the embezzling catalyst.


Before presenting our main theorem, let us first address the issue of the quantum system's dimension. As mentioned previously in Eq.~\eqref{eq:choi}, our protocol begins by preparing a maximally entangled state $\phi^+_{m}$ with Schmidt rank $m$. However, this $m$ does not necessarily equal the input dimension of the noisy quantum channel, namely $m\neq\dim A$. If $m>\dim A$, we can project the state onto system $A$. Conversely, if $m<\dim A$, we can use an isometry to map $\phi^+_{m}$ into system $A$. Hence, without loss of generality, we can assume $m=\dim A$, allowing the noisy channel $\mN$ to be applied directly to $\phi^+_{m}$. Now, we move on to the convex-split-lemma-assisted catalytic channel capacity theorem.

\begin{theorem}
[{\bf (Convex-Split-Lemma-Assisted Catalytic Channel Capacity)}]
\label{thm:csl}
Given a noisy quantum channel $\mN$ (see Fig.~\ref{fig:cqc}) and an error threshold $\epsilon>0$, we can construct a bipartite state $\tau:= p\phi^+_m + (1-p)\zeta$ such that $P(\tau, \phi^+_{m})\leqslant\sqrt{\epsilon}/2$. Here, $\zeta$ is a full-rank quantum state. Let $k$ denote the max-relative entropy of $\rho^{\mN}$ (see Eq.~\eqref{eq:choi}) with respect to $\tau$, that is, $k:= D_{\rm{max}}(\rho^{\mN}\parallel\tau)$. Based on $n-1$ copies of $\tau$, we can introduce the following catalyst
\begin{align}\label{eq:tau-cs}
  \tau^{CS}:= \tau^{\otimes n-1}.
\end{align}
For max-relative entropy $k$, if 
\begin{align}\label{eq:restriction-k-n}
     n \geqslant \left\lceil \frac{2^{k+2}}{\epsilon}\right\rceil,
\end{align}
then there exists a catalytic quantum communication protocol satisfying
\begin{align}
    Q_c^{\epsilon}(\mN) \geqslant  \log m.
\end{align}

\end{theorem}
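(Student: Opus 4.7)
The plan is to turn Lemma~\ref{lem:CS} into a concrete LOCC superchannel that achieves the advertised rate. First I would fix $\Theta^{\rm{Pre}}$ to prepare the maximally entangled state $\phi^{+}_{m,AA'}$ on Alice's side (using the dimension-matching remark stated just before the theorem), so that pushing $A'$ through $\mN$ yields the Choi state $\rho^{\mN}_{AB}$ of Eq.~\eqref{eq:choi}. The $n-1$ copies of $\tau$ comprising $\tau^{CS}$ are already distributed bipartitely between Alice and Bob, arranged to live on registers of the same bipartite dimension as $AB$. Just before post-processing, the joint state thus sits on $n$ bipartite slots: one slot carries $\rho^{\mN}_{AB}$ and the other $n-1$ slots each carry $\tau$.

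For the post-processing $\Theta^{\rm{Post}}$ I would implement a uniform random permutation using shared classical randomness. Alice samples $i\in\{1,\dots,n\}$ uniformly and transmits $i$ to Bob, after which both parties locally swap their halves of the ``system'' slot with those of slot $i$. This is manifestly LOCC, being classically coordinated local swaps. The resulting $n$-partite state is exactly the convex-split mixture
\begin{align*}
\mu=\frac{1}{n}\sum_{i=1}^{n}\tau_{1}\otimes\cdots\otimes\rho^{\mN}_{i}\otimes\cdots\otimes\tau_{n},
\end{align*}
in the notation of Lemma~\ref{lem:CS} with $\rho=\rho^{\mN}$. I identify the first slot of $\mu$ with $AB$ and the remaining $n-1$ slots with the catalyst registers $CC'$.

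The analysis then proceeds in three standard estimates. Applying Lemma~\ref{lem:CS} with $k:=D_{\rm{max}}(\rho^{\mN}\parallel\tau)$ gives $P(\mu,\tau^{\otimes n})\leqslant\sqrt{2^{k}/n}$, and the hypothesis~\eqref{eq:restriction-k-n} tightens this to $P(\mu,\tau^{\otimes n})\leqslant\sqrt{\epsilon}/2$. Since Uhlmann fidelity is multiplicative on tensor products, $F_U(\tau\otimes\tau^{\otimes n-1},\phi^{+}_{m}\otimes\tau^{\otimes n-1})=F_U(\tau,\phi^{+}_{m})$, so the hypothesis $P(\tau,\phi^{+}_{m})\leqslant\sqrt{\epsilon}/2$ lifts directly to $P(\tau^{\otimes n},\phi^{+}_{m}\otimes\tau^{\otimes n-1})\leqslant\sqrt{\epsilon}/2$. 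The triangle inequality~\eqref{eq:pur-tri} then yields $P(\mu,\phi^{+}_{m}\otimes\tau^{\otimes n-1})\leqslant\sqrt{\epsilon}$, which is equivalent to $p_{\rm{err}}^{c}(\mu)=1-F_U(\mu,\phi^{+}_{m}\otimes\tau^{CS})\leqslant\epsilon$ and so certifies $Q_c^{\epsilon}(\mN)\geqslant\log m$ via Def.~\ref{def:CCC}.

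The main obstacle I expect is not in the estimates above but in the supporting bookkeeping. First, the finiteness of $D_{\rm{max}}(\rho^{\mN}\parallel\tau)$ must be secured; this is precisely the point of insisting that $\zeta$ be full-rank, so that $\tau$ itself is full-rank and $\mathrm{supp}(\rho^{\mN})\subseteq\mathrm{supp}(\tau)$ automatically, keeping $k$ finite and hence $n$ in~\eqref{eq:restriction-k-n} finite. Second, and more delicately, the random-swap post-processing requires each catalyst slot to be bipartitely isomorphic to the $AB$ register; any residual mismatch would have to be absorbed into cost-free local isometries inside $\Theta^{\rm{Pre}}$ and $\Theta^{\rm{Post}}$, which must be tracked explicitly to certify that the full composition remains an LOCC superchannel. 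Once this dimensional housekeeping and the support condition are in place, the convex-split lemma does essentially all of the analytical work.
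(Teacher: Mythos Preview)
Your proposal is correct and follows essentially the same route as the paper: prepare the Choi state $\rho^{\mN}$ by local pre-processing, use a classically coordinated random swap (one-way LOCC) as post-processing to produce the convex-split mixture $\mu$, and then combine the bound $P(\mu,\tau^{\otimes n})\leqslant\sqrt{2^{k}/n}$ from Lemma~\ref{lem:CS} with the triangle inequality and the hypothesis $P(\tau,\phi^{+}_{m})\leqslant\sqrt{\epsilon}/2$. Your explicit invocation of fidelity multiplicativity to lift $P(\tau,\phi^{+}_{m})$ to $P(\tau^{\otimes n},\phi^{+}_{m}\otimes\tau^{CS})$, and your remarks on the full-rank condition securing finiteness of $k$ and on the dimensional bookkeeping, are welcome points of rigor that the paper leaves implicit.
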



\begin{figure}[t]
	\begin{center}
\includegraphics[width=0.48\textwidth]{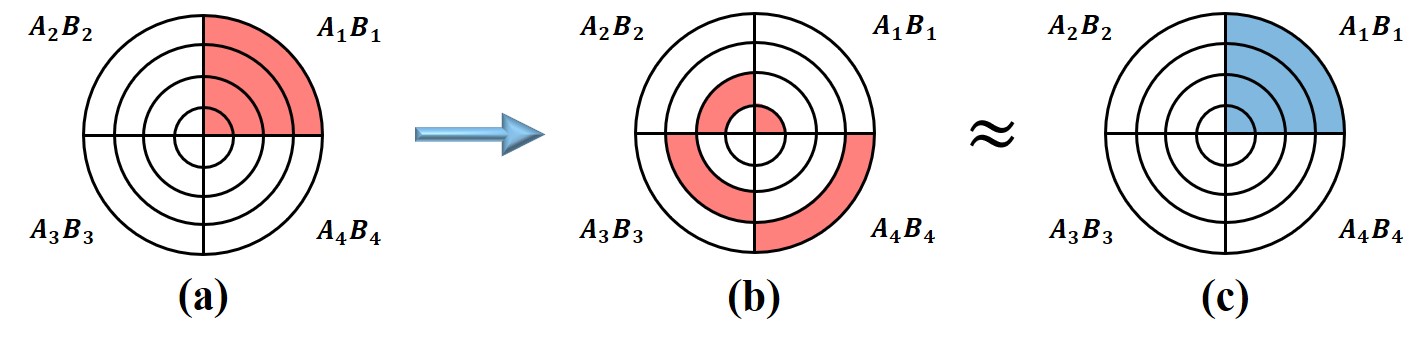}
	\end{center}
	\caption{{Graphical illustration of the convex-split lemma.} 
We use a visual example with $n=4$ to illustrate the convex-split lemma. In this example, each color represents a distinct quantum state: red signifies $\rho^{\mN}$, white denotes $\tau$, and blue indicates the maximally entangled state $\phi^+$. The diagram features a uniform mixture of four concentric circles. Each concentric circle, labeled as the $i$-th layer (where $i$ ranges from $1$ to $4$), represents a different quantum state, with the innermost circle being the 1st layer and the outermost circle being the 4th layer.
For instance, in (a), circles stand for the state of form $\rho_{1}^{\mN}\otimes \tau_{2} \otimes \tau_{3} \otimes \tau_{4}$, where $i\in[4]$ denotes systems $A_iB_i$. Meanwhile, in (b), the $i$-th layer corresponds to the quantum state $\tau_{1}\otimes\cdots\otimes\rho_{i}^{\mN}\otimes\cdots\otimes\tau_{4}$, resulting in the overall state $(1/4)\sum_{i=1}^{4}\tau_{1}\otimes\cdots\otimes\rho_{i}^{\mN}\otimes\cdots\otimes\tau_{4}$ (see Eq.~\eqref{eq:U}). Finally, in (c), circles showcase the state $\phi^+_{1}\otimes \tau_{2} \otimes \tau_{3} \otimes \tau_{4}$, which we aim to approximate using the convex-split lemma.} \label{fig:CS-channel}
\end{figure}

\begin{proof}
We proceed with the proof by constructing the catalytic quantum communication protocol. Initially, Alice creates a maximally entangled state $\phi^+_{AA'}$ locally on her side and sends a part of her system through the noisy channel $\mN$, resulting in the Choi state of $\rho^{\mN}:=\rho_{AB}^{\mN}$ (see Eq.~\eqref{eq:choi}). Incorporating the catalyst $\tau^{CS}$ (see Eq.~\eqref{eq:tau-cs}), the overall state becomes
\begin{align}\label{eq:rho^N-tau}
    \rho_1^{\mN}\otimes\tau_2\otimes\cdots\otimes\tau_n,
\end{align}
where system $1$ stands for $A_1B_1:= AB$, and $i$ represents $A_iB_i$ systems with $A_i\cong A$, $B_i\cong B$, and $2\leqslant i\leqslant n$. In this case, the catalytic systems are $CC^{'}:= \otimes_{i=2}^{n}A_iB_i$. During the post-processing of catalytic quantum communication, Alice can randomly select a number $i$ from $1$ to $n$ and send it to Bob. Upon receiving the classical message $i$, Alice and Bob will SWAP $\rho^{\mN}$ with the $i$-th system of Eq.~\eqref{eq:rho^N-tau}, resulting in the form of $\mu$ constructed in Eq.~\eqref{eq:U}. Note that within this protocol, the pre-processing is achieved through local operations, and the post-processing is a one-way LOCC. Hence, the overall procedure forms an LOCC superchannel. 

Thanks to the convex-split lemma (see Lem.~\ref{lem:CS}) and the triangle inequality of purified distance, we obtain
\begin{align}
    P(\mu, \phi^+_m\otimes\tau^{CS})
    &\leqslant
    P(\mu, \tau^{\otimes n})
    +
    P(\tau^{\otimes n}, \phi^+_m\otimes\tau^{CS})\notag\\
    &\leqslant
    \sqrt{\frac{2^k}{n}}+\frac{\sqrt{\epsilon}}{2}.
\end{align}
With sufficient copies of $\tau$ in formulating the catalyst $\tau^{CS}$ (see Eq.~\eqref{eq:tau-cs}), we arrive at
\begin{align}\label{eq:p-errc}   
         p_{\text{err}}^c(\mu)
         =
         1-F_U(\mu, \phi^+\otimes\tau^{CS})
         \leqslant
         \epsilon,
\end{align}
which concludes the proof.
\end{proof}

\begin{figure*}
    \centering
    \includegraphics[width=1\textwidth]{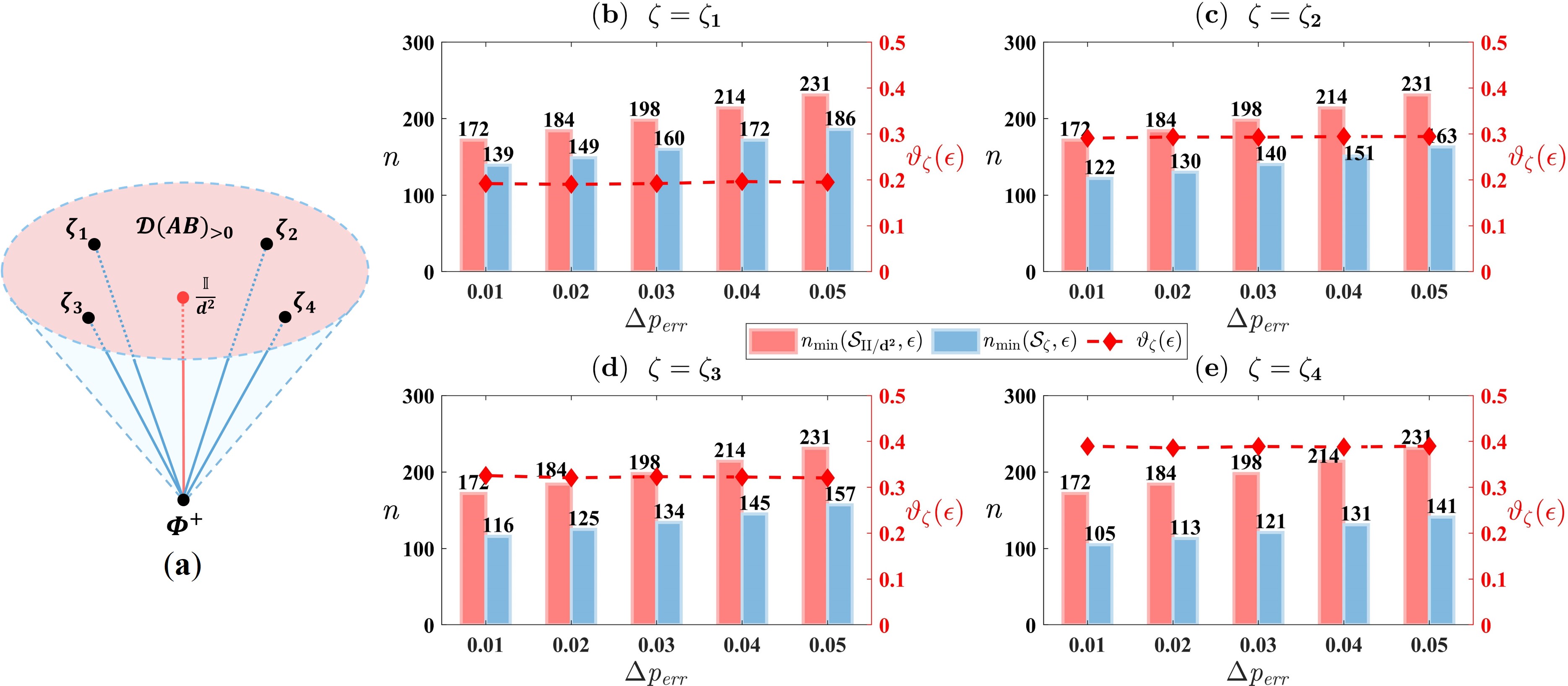}
     \caption{{Comparisons between $n_{\text{min}}(\mS_{\mI/d^2}, \epsilon)$ and $n_{\text{min}}(\mS_{\zeta_i}, \epsilon)$.} 
    In (a), we visualize our idea and depict the relationship between the maximally mixed state $\mI/d^2$ and the four randomly selected fully ranked states $\zeta_i$ for $i \in [4]$. Each point on the line represents a potential $\tau$ considered in Eq.~\eqref{eq:tau-cs}. In (b) to (e), the red and blue bar graphs represent the number of copies of $\tau$ required, with the red dotted line indicating the percentage of copy reduction, i.e., the descent ratio $\vartheta(\epsilon)$, as defined in Eq.~\eqref{eq:ratio}. Here, $\Delta p_{\text{err}}$ (see Eq.~\eqref{eq:delta-p-err}) denotes the reduction in error achieved through the use of embezzling catalysts.
    }
    \label{fig:tau-set}
\end{figure*}

In addition to achieving a non-negative catalytic channel capacity with the help of \(\tau^{CS}\), we aim to minimize the consumption of \(\tau^{CS}\) during catalytic quantum communication, as significant changes are undesirable. Specifically, the variation of the embezzling catalyst, measured in terms of purified distance, is upper bounded by
\begin{align}\label{eq:consumption-convex}
        P\left(
           \Tr_{AB}[\mu], \tau^{CS}
         \right)
       \leqslant \sqrt{\frac{2^k}{n}}.
\end{align}
This inequality follows directly from Eq.~\eqref{eq:rho-tau} and the quantum data processing inequality. Consequently, it suggests that increasing the number of copies of $\tau$ used in constructing $\tau^{CS}$ (see Eq.~\eqref{eq:tau-cs}) reduces overall consumption, thereby maintaining the catalytic systems closer to their original state.

The performance of quantum communication systems varies, requiring different error tolerances for different platforms. Our focus is on identifying the minimal dimensional requirements to ensure that the variation in the embezzling catalyst remains within an acceptable error $\delta$. Specifically, for protocols assisted by the convex-split lemma, this issue can be analyzed in terms of the number of copies of the state $\tau$ needed to construct $\tau^{CS}$ (see Eq.~\ref{eq:tau-cs}). According to Eq.~\ref{eq:consumption-convex}, the minimum number of copies, 
$n$, required is determined by
\begin{align}
    n\geqslant \frac{2^k}{\delta^2}.
\end{align}

Since the smallest quantum system is a qubit system, this implies that $m\geqslant 2$. Consequently, Thm.~\ref{thm:csl} encompasses Theorem Thm.~\ref{thm:cqc} as a special case. From Thm.~\ref{thm:csl}, we understand that by allowing slight consumption of the catalysts, specifically utilizing embezzling catalysts, we can reliably transmit more quantum information. However, what's more practical is achieving the same task using embezzling catalysts with lower dimensions. This is crucial considering the current stage of quantum technologies, where creating and manipulating large quantum systems remains a challenge. Further details on reducing the catalytic dimension will be discussed in upcoming sections.

\begin{table*}
    \centering
    \includegraphics[width=1\textwidth]{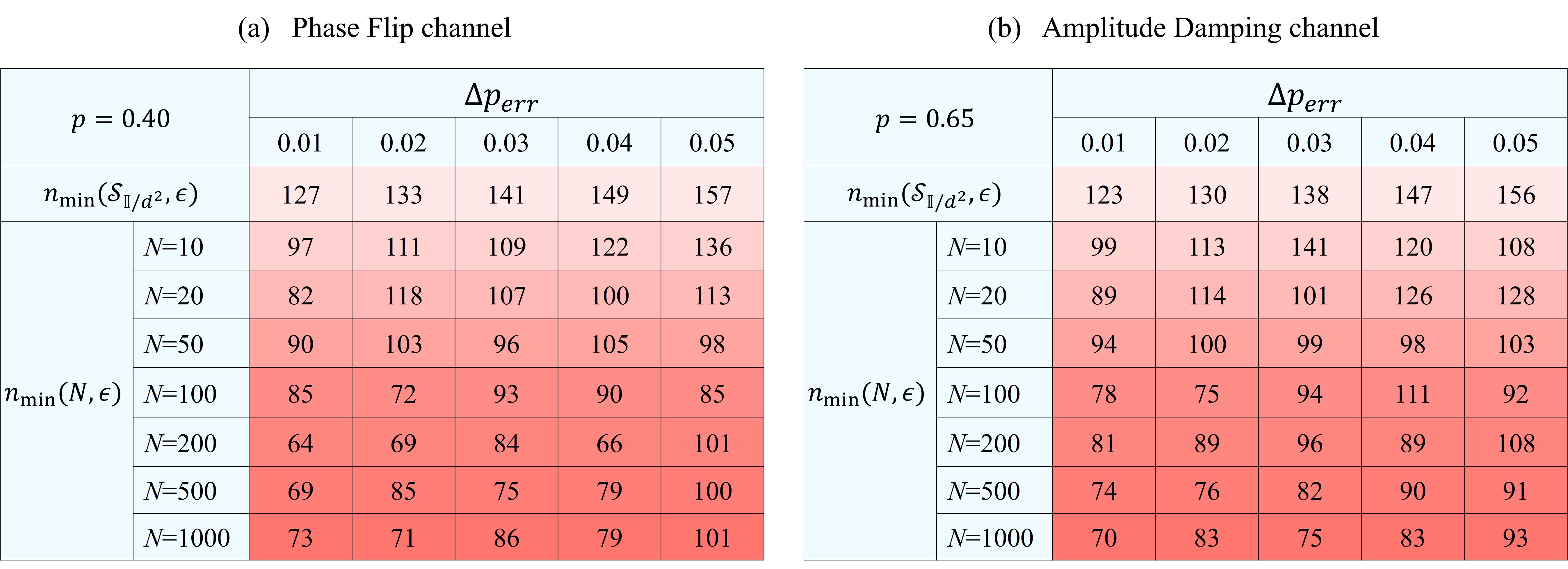}
\caption{{Comparisons between $n_{\text{min}}(\mS_{\mI/d^2}, \epsilon)$ and $n_{\text{min}}(N, \epsilon)$.}
We randomly choose $N$ fully ranked states to construct $\tau$ (see Eq.~\eqref{eq:tau-cs}) for (a) the dephasing channel $\mZ_p$ (with noise parameter $p=0.4$) and (b) the amplitude damping channel $\mN_{\text{AD}}$ (with noise parameter $p=0.65$), and compare the number of copies needed to achieve higher fidelity.
}
\label{tab:my-table}
\end{table*}

\vspace{-1em}
\subsection{Dimensionality Reduction for Catalysts} 
\vspace{-0.5em}

In pursuit of practical catalytic quantum communications, a key consideration is the reduction of the dimensionality of catalyst systems. In this section, our focus is on investigating this challenge to advance catalytic quantum communications. Drawing from the insights of the convex-split lemma, it becomes clear that to create a catalyst enabling quantum communications, the key is to identify quantum states whose support encompasses that of the Choi state $\rho^{\mN}$ (see Eq.~\eqref{eq:choi}) of noisy channel $\mN$ (see Fig.~\ref{fig:cqc}). We denote the collection of all such states as 
   \begin{align}\label{eq:S-supp}
           \mS_{{\rm supp}}
           :=
           \{\tau \,|\, 
           {\rm supp} (\rho^{\mN}) 
           \subseteq 
           {\rm supp} (\tau)\}.
   \end{align}
where $\tau$ is a bipartite quantum state acting on systems $AB$. The minimal dimension $n_{\min}$ required in constructing a state that enables convex-split-lemma-assisted catalytic quantum communication is characterized by iterating over all possible 
$\tau$ from set $\mS_{{\rm supp}}$. Mathematically, it is expressed as
\begin{align}\label{eq:n-min}
    n_{\min}(\mS_{{\rm supp}}, \epsilon):&=\notag\\
    \min\ \  
    &n\notag\\
    \text{s.t.}\ \ 
    &\sqrt{\frac{2^{D_{\text{max}}(\rho^{\mN}\,\|\,\tau)}}{n}} + \sqrt{1-F_U(\tau, \phi^+)}\leqslant 
    \sqrt{\epsilon} ,\notag\\
    &\tau \in \mS_{{\rm supp}}.
\end{align}
Solving the optimization problem outlined in Eq.~\eqref{eq:n-min} poses a considerable challenge due to the mathematical structure of the set $\mS_{{\rm supp}}$ and the presence of non-linear constraints. However, we can explore approaches to bound its performance and investigate strategies for reducing the dimensionality of catalytic systems. One viable avenue involves deriving a computable upper bound for $n_{\min}(\mS_{{\rm supp}}, \epsilon)$ (as defined in Eq.~\eqref{eq:n-min}) by substituting the original $\mS_{{\rm supp}}$ with a carefully chosen subset, thus sidestepping these complexities. In particular, we can explore  states with a fixed structure, such as
\begin{align}\label{eq:S-zeta}
    \mS_{\zeta}:=\{p\phi^++(1-p)\zeta\,|\,p\in[0,1)\}.
\end{align}
Here, $\zeta$ represents a fully-ranked quantum state, ensuring that the support of $\zeta$ covers that of Choi state $\rho^{\mN}$. It is evident that the relationship between $\mS_{{\rm supp}}$ and $\mS_{\zeta}$ can be succinctly expressed through the following inclusion.
\begin{align}
    \mS_{\zeta}\subset\mS_{{\rm supp}}.
\end{align}
Substituting the set $\mS_{{\rm supp}}$ within $n_{\min}(\mS_{{\rm supp}}, \epsilon)$ with $\mS_{\zeta}$ allows us to establish $n_{\text{min}}(\mS_{\zeta}, \epsilon)$. Let $\epsilon$ be fixed; then, they adhere to the following inequality.
\begin{align}
    n_{\text{min}}(\mS_{{\rm supp}}, \epsilon)
    \leqslant
    n_{\text{min}}(\mS_{\zeta}, \epsilon).
\end{align}
It's notable that in many applications of the convex-split lemma, particularly in quantum resource theories, such as catalytic distillation of coherence~\cite{PhysRevA.100.042323}, physicists commonly opt for $\zeta$ to be the maximally mixed state $\mI/d^2$ to derive computable bounds. However, we will demonstrate that selecting alternative full-ranked states could offer advantages in terms of reducing the dimensionality of the catalytic system. To illustrate this point, let's establish $n_{\text{min}}(\mS_{\mI/d^2}, \epsilon)$ as a benchmark and compare the efficacy of our protocol against it.

To initiate our first numerical experiment, we focus on the dephasing channel, namely $\mN(\cdot) = \mZ_p(\cdot)$ (see Eq.~\eqref{eq:dephasing}). We compare the dimension requirements for both the maximally mixed states and four randomly selected fully ranked states, denoted as $\zeta_1$ to $\zeta_4$. Fig.~\ref{fig:tau-set} illustrates that the convex-split-lemma-assisted catalytic quantum communication with these randomly selected fully ranked states requires a smaller dimension for the catalytic system, thereby confirming the following result.
\begin{align}
      n_{\text{min}}(\mS_{\zeta_i}, \epsilon)
      \leqslant 
      n_{\text{min}}(\mS_{\mI/d^2}, \epsilon),
\end{align}
for any $i\in[4]$. To further demonstrate the advantage of using randomly selected fully ranked states, for fixed error threshold $\epsilon$, we define the descent ratio $\vartheta(\epsilon)$ of $n_{\text{min}}(\mS_{\zeta_i}, \epsilon)$ compared with $n_{\text{min}}(\mS_{\mI/d^2}, \epsilon)$ and present the results in Fig.~\ref{fig:tau-set}.
\begin{align}\label{eq:ratio}
    \vartheta(\epsilon)
    :=
    \frac{
    n_{\text{min}}(\mS_{\mI/d^2}, \epsilon)
    -
    n_{\text{min}}(\mS_{\zeta_i}, \epsilon)
    }
    {n_{\text{min}}(\mS_{\mI/d^2}, \epsilon)}.
\end{align}

From the numerical experiment above, we observe that different fully ranked states exhibit varying performance in quantum communication with embezzling catalysts. In Fig.~\ref{fig:tau-set}, $\zeta_4$ requires the smallest dimension for the catalyst across different values of $\epsilon$. In practice, we can randomly select a finite number, $N$, of fully ranked states $\zeta_i$ with $i\in[n]$, choose the optimal one, and denote the minimal copies required for the catalytic system as $n_{\text{min}}(N, \epsilon)$, which satisfies
\begin{align}
n_{\text{min}}(\mS_{{\rm supp}}, \epsilon)
\leqslant
n_{\text{min}}(N, \epsilon).
\end{align}

\begin{figure*}
    \centering
    \includegraphics[width=1\textwidth]{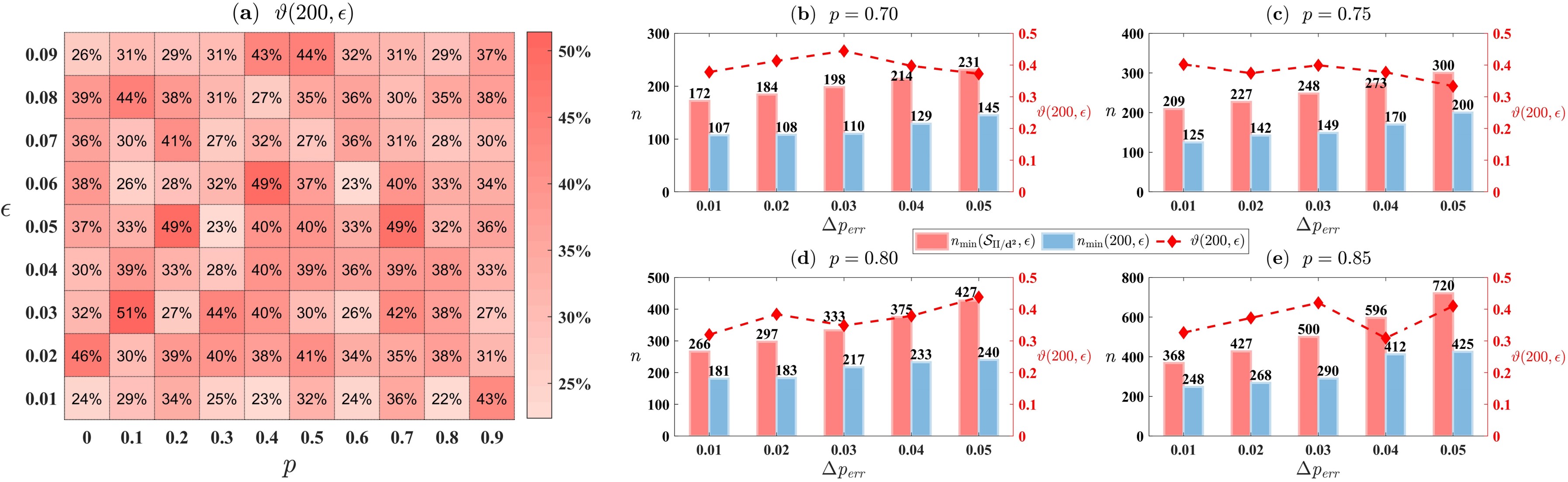}
    \caption{{Comparisons between $n_{\text{min}}(\mS_{\mI/d^2}, \epsilon)$ and $n_{\text{min}}(200, \epsilon)$.}  
    In (a), we present a numerical illustration of the descent ratio $\vartheta(200, \epsilon)$, as defined in Eq.~\eqref{eq:ratio-N}, across different values of $p$ and $\epsilon$.
    In (b), we illustrate the number of copies of $\tau$ (see Eq.~\eqref{eq:tau-cs}) required to achieve improved performance in quantum communication. Additionally, we highlight the advantage of using randomly selected fully ranked states compared to the maximally mixed state. 
    In this context, $\Delta p_{\text{err}}$ as defined in Eq.~\eqref{eq:delta-p-err}, represents the decrement in error attributable to the utilization of embezzling catalysts.
    }
    \label{fig:pauli_channel_theta}
\end{figure*}

For our second example, we use the dephasing channel $\mZ_p$ (see Eq.~\eqref{eq:dephasing}) and the amplitude damping channel $\mN_{\text{AD}}$, defined as 
\begin{align}
    \mN_{\text{AD}}(\cdot):=
    K_{0}\cdot K_{0}^{\dagger}
    +
    K_{1}\cdot K_{1}^{\dagger},
\end{align}
with
\begin{align}
    K_{0}=\begin{bmatrix}1&0\\0&\sqrt{1-p}\end{bmatrix},
    \quad
    K_{1}=\begin{bmatrix}0&\sqrt{p}\\0&0\end{bmatrix},
\end{align}
to investigate the performance of $N$ randomly selected fully ranked states. We compare their performance against that of the maximally mixed state $\mI/d^2$ and present the results in TABLE~\ref{tab:my-table}. In principle, the performance should improve as more fully ranked states are chosen. However, our numerical experiments indicate that beyond a certain number of samples ($200$ for the dephasing channel and $100$ for the amplitude damping channel), the benefits of increasing the number become negligible. Compared with $n_{\text{min}}(\mS_{\mI/d^2}, \epsilon)$, the descent ratio $\vartheta(N, \epsilon)$ achieved by $N$ fully ranked states is defined as follows 
\begin{align}\label{eq:ratio-N}
    \vartheta(N, \epsilon)
    :=
    \frac{
    n_{\text{min}}(\mS_{\mI/d^2}, \epsilon)
    -
    n_{\text{min}}(N, \epsilon)
    }
    {n_{\text{min}}(\mS_{\mI/d^2}, \epsilon)}.
\end{align}
For the dephasing channel $\mZ_p$ with $200$ fully ranked states, we conducted additional numerical experiments, and the results are presented in Fig.~\ref{fig:pauli_channel_theta}. These results further confirm the advantage of using randomly selected fully ranked states over the maximally mixed state.

\subsection{Embezzling-State-Assisted Quantum Communication}
Our next catalytic quantum communication protocol uses the {\it embezzling states}, denoted as $\tau^{E}$. Originally introduced by van Dam and Hayden in the context of entanglement theory~\cite{PhysRevA.67.060302}, these states have proven to be remarkably versatile, extending their utility across various quantum resources. They have found applications in thermodynamics~\cite{GOUR20151,Ng_2015}, coherence~\cite{PhysRevA.100.042323,PhysRevLett.113.150402}, and beyond~\cite{10.1063/1.4938052,7377103}, enriching our understanding of these fields. Furthermore, embezzling states have also played an important role in the quantum reverse Shannon theorem~\cite{4957651,doi:10.1007/s00220-011-1309-7}. We now proceed to a detailed exploration of embezzling-state-assisted catalytic quantum communication.

\begin{theorem}
[{\bf (Embezzling-State-Assisted Catalytic Channel Capacity)}]
\label{Thm:es-ccc}
Given a noisy quantum channel $\mN$ (see Fig.~\ref{fig:cqc}) and an error threshold $\epsilon>0$, we can utilize the embezzling state $\tau^{E}$ to improve quantum communication
\begin{equation}\label{eq:catalyst-embezzling}
        \ket{\tau^{E}} = 
        \frac{1}{\sqrt{c_M}}\sum_{j=1}^M\frac{1}{\sqrt{j}}\ket{jj} ,
\end{equation}
where the coefficient $c_M:=\sum_{j=1}^{M}(1/j)$ normalizes the state $\tau^{E}$.
If the Schmidt rank of the embezzling state $\tau^{E}$ satisfies
\begin{align}\label{eq:Schmidt-rank}
        M
        \geqslant 
        \lceil m^{(1-\sqrt{1-\epsilon})^{-1}} \rceil,
\end{align}
 then there exists a catalytic quantum communication protocol that satisfies
\begin{align}\label{eq:capacity2}
	Q_c^{\epsilon}(\mN)
        \geqslant 
        \log m.
\end{align}
\end{theorem}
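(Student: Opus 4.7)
The plan is to harness the central property of the van Dam--Hayden embezzling state $\ket{\tau^{E}}$: with an embezzler of Schmidt rank $M$, any bipartite pure state $\ket{\psi}$ of Schmidt rank at most $m$ can be approximately produced on ancillary systems via local unitaries, with amplitude overlap at least $1 - \log m / \log M$. Concretely, there exist unitaries $U_{A}$ acting on Alice's $AC$ systems and $U_{B}$ acting on Bob's $BC'$ systems such that
\begin{align*}
    \bigl|\bra{\psi}_{AB}\bra{\tau^{E}}_{CC'}(U_{A}\otimes U_{B})\ket{00}_{AB}\ket{\tau^{E}}_{CC'}\bigr|
    \geqslant
    1 - \frac{\log m}{\log M}.
\end{align*}

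Given this tool, I would build the following three-stage catalytic protocol. In pre-processing, Alice keeps $\ket{00}_{AA'}$ unchanged. The noisy channel $\mN_{A'\to B}$ then transmits $A'$ to Bob, producing $\ketbra{0}{0}_{A}\otimes\mN(\ketbra{0}{0})_{B}\otimes\ketbra{\tau^{E}}{\tau^{E}}_{CC'}$. In post-processing, Bob first applies the trace-and-prepare channel on $B$, outputting $\ketbra{0}{0}_{B}$ regardless of its input; this is a valid local operation and hence LOCC. The resulting state is the pure product $\ketbra{00}{00}_{AB}\otimes\ketbra{\tau^{E}}{\tau^{E}}_{CC'}$. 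Alice and Bob then apply the van Dam--Hayden unitaries $U_{AC}$ and $U_{BC'}$ tailored to embezzle $\ket{\phi^{+}_{m}}_{AB}$ from $\ket{00}_{AB}$, so that the output $\mu_{ABCC'}$ approximates $\ketbra{\phi^{+}_{m}}{\phi^{+}_{m}}_{AB}\otimes\ketbra{\tau^{E}}{\tau^{E}}_{CC'}$.

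To convert the embezzling bound into the claimed Schmidt-rank condition, observe that the amplitude inequality above gives $\sqrt{F_{U}(\mu,\, \ketbra{\phi^{+}_{m}}{\phi^{+}_{m}}_{AB}\otimes\ketbra{\tau^{E}}{\tau^{E}}_{CC'})} \geqslant 1 - \log m / \log M$. Imposing $p_{\mathrm{err}}^{c}(\mu) = 1 - F_{U} \leqslant \epsilon$, equivalently $\sqrt{F_{U}} \geqslant \sqrt{1-\epsilon}$, yields
\begin{align*}
    \log M
    \geqslant
    \frac{\log m}{1 - \sqrt{1-\epsilon}},
    \quad \text{i.e.,} \quad
    M \geqslant m^{1/(1-\sqrt{1-\epsilon})},
\end{align*}
which is exactly Eq.~\eqref{eq:Schmidt-rank}. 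The protocol therefore attains $Q_{c}^{\epsilon}(\mN) \geqslant \log m$.

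The main obstacle, in my view, is the careful justification of the amplitude-overlap bound for the harmonic-weighted embezzler. The exponent $1/(1-\sqrt{1-\epsilon})$, rather than a weaker scaling like $1/\epsilon$, depends on the amplitude-level inequality $\sqrt{F_{U}} \geqslant 1 - \log m / \log M$ and not merely the fidelity-level inequality $F_{U} \geqslant 1 - \log m / \log M$. Establishing this tight form requires explicitly computing the inner product between $\ket{\tau^{E}_{M}}_{CC'}$ and the state obtained by locally rearranging it into an approximate copy of $\ket{\phi^{+}_{m}}\otimes\ket{\tau^{E}_{M/m}}$; the calculation hinges on the harmonic normalization $c_{M}=\sum_{j=1}^{M}1/j$ and the near-indistinguishability of consecutive embezzlers $\ket{\tau^{E}_{N}}$ and $\ket{\tau^{E}_{N'}}$ when $N\approx N'$. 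The remaining steps -- verifying that trace-and-prepare is LOCC and composing it with the embezzling unitaries -- are essentially bookkeeping.
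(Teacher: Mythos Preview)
Your proposal is correct and follows essentially the same approach as the paper: discard whatever comes through $\mN$ via a local trace-and-prepare (the paper resets to $\ket{11}$, you to $\ket{00}$, which is immaterial), then apply the van Dam--Hayden local unitaries to embezzle $\ket{\phi^+_m}$ out of $\ket{\tau^E}$, and finally convert the amplitude-level overlap bound $|\langle\cdot,\cdot\rangle|\geqslant 1-\log m/\log M$ into the Schmidt-rank condition $M\geqslant m^{1/(1-\sqrt{1-\epsilon})}$. The paper fills in the step you flag as the main obstacle by explicitly writing the permutation unitary $U_{AC}$ and an auxiliary state $\ket{\omega}$ with $U_{AC}\otimes U_{BC'}\ket{\omega}=\ket{\phi^+_m}\otimes\ket{\tau^E}$, then bounding $|\langle\ket{11}\otimes\ket{\tau^E},\ket{\omega}\rangle|\geqslant(\log M-\log m)/\log M$ directly from the harmonic coefficients.
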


\begin{figure}
	\begin{center}   
    \includegraphics[width=0.48\textwidth]{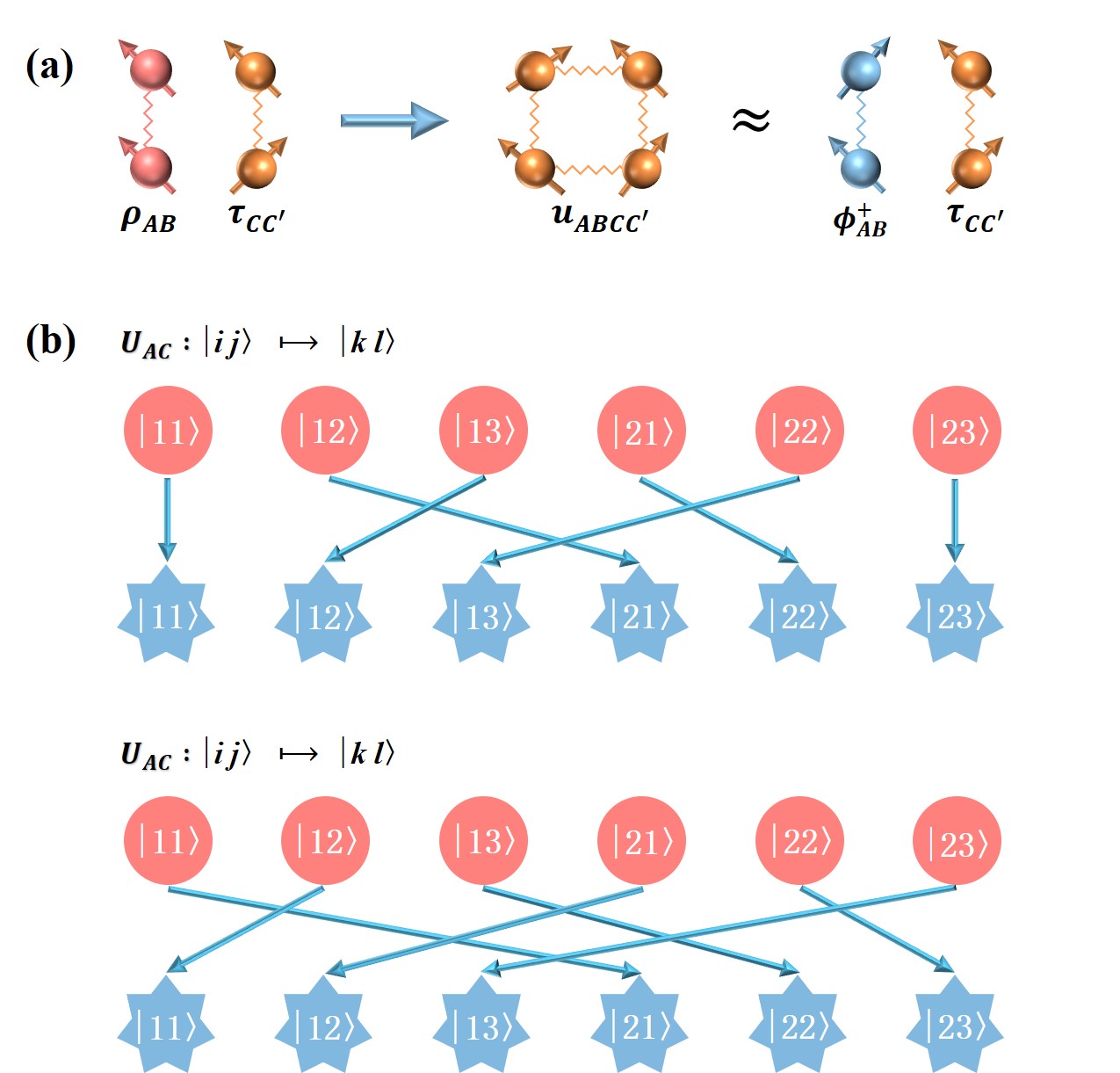}
	\end{center}
	\caption{{Embezzling-state-assisted protocol.} 
(a). After receiving $\rho_{AB}$ through the implementation of the noisy quantum channel $\mN$, we perform some LOCC operations that transform $\rho_{AB}\otimes\tau_{CC^{'}}$ into $\mu_{ABCC^{'}}$, which can be used to approximate $\phi^+_m\otimes \tau^E$.
(b). We offer two distinct approaches to realizing the unitary operation $U_{AC}$, utilized to derive $\mu$ and transform $\omega$ into $\phi^+_m\otimes \tau^E$, and illustrate the process for the case when $m=2$ and $M=3$.
} 
\label{fig:ES}
\end{figure}

\begin{proof}  
In the context of embezzling-state-assisted catalytic channel communication, the communication power predominantly stems from the embezzling catalyst. Specifically, within this protocol, the necessity of pre-processing becomes negligible. Regardless of the bipartite state obtained on systems $AB$ through the implementation of noisy channel $\mN$, it can be effectively substituted with $\ket{11}$ via LOCC. Subsequently, a joint unitary operation $U_{AC} \otimes U_{BC^{'}}$ is enacted upon the entire system $\ketbra{11}\otimes \tau^{E}$, leading to
\begin{align}\label{eq:mu}
    \mu=
    U_{AC} \otimes U_{BC'}
    (\ketbra{11}{11}_{AB}\otimes\tau^E_{CC^{'}}) 
    U_{AC}^{\dagger} \otimes U_{BC'}^{\dagger}.
\end{align}
Here the unitary map $U_{AC}$ is defined as
\begin{align}\label{eq:unitary}
       U_{AC}: 
       \mH_{A}\otimes \mH_{C} &\rightarrow \mH_{A}\otimes \mH_{C}\notag\\
       \ket{i}_A \otimes \ket{j}_C &\mapsto \ket{k}_A \otimes \ket{l}_C,
\end{align}
where $l=\lceil\frac{(i-1)M+j}{m}\rceil$, and $k=(i-1)M+j-(l-1)m$. 

To assess the performance of the quantum state $\mu$ in approximating $\phi^+_m\otimes \tau^E$, we introduce a supplementary quantum state $\ket{\omega}$ to facilitate our analysis
\begin{equation}
     \ket{\omega}
     =
     \sum^{m}_{i=1} \sum^{M}_{j=1}\omega_{ij} \ket{ii}\ket{jj}.
\end{equation}
The coefficient $\omega_{ij}:=1/\sqrt{\lceil (i-1)M+j/m \rceil mc_M}$ are arranged in the dictionary order, namely
\begin{align}
    \omega_{11}\geqslant\cdots\geqslant  \omega_{1M}\geqslant\cdots\geqslant\omega_{m1}\geqslant\cdots\geqslant\omega_{mM}.
\end{align}
For the case \(i=1\), we have
\begin{equation}\label{eq:omega_j}
     \omega_{1j}
     =
     1/{\sqrt{\lceil j/m \rceil mc_M}} \leqslant 1/{\sqrt{jc_M}}.
\end{equation}

\noindent Hence, the inner product between states $\ket{11}_m\otimes\ket{\tau^{E}}$ and $\ket{\omega}$ is bounded by
\begin{align}
    \bigg|\bigg\langle
    \ket{11}\otimes\ket{\tau^E} ,
    \ket{\omega}
    \bigg\rangle\bigg|
    \geqslant 
    \frac{\log M- \log m}{\log M}. \label{eq:omega_Fu}
\end{align}
Another important property of $\omega$ is that by applying the joint unitary operation $U_{AC} \otimes U_{BC^{'}}$, $\omega$ will be transformed into $\phi^+_m\otimes \tau^E$, i.e.,
\begin{align}\label{eq:Lambda-eU}
    U_{AC} \otimes U_{BC^{'}}\ket{\omega}
    =
    \ket{\phi^+_m}\otimes\ket{\tau^E} .
\end{align}

\begin{figure*}
	\begin{center}   
    \includegraphics[width=1\textwidth]{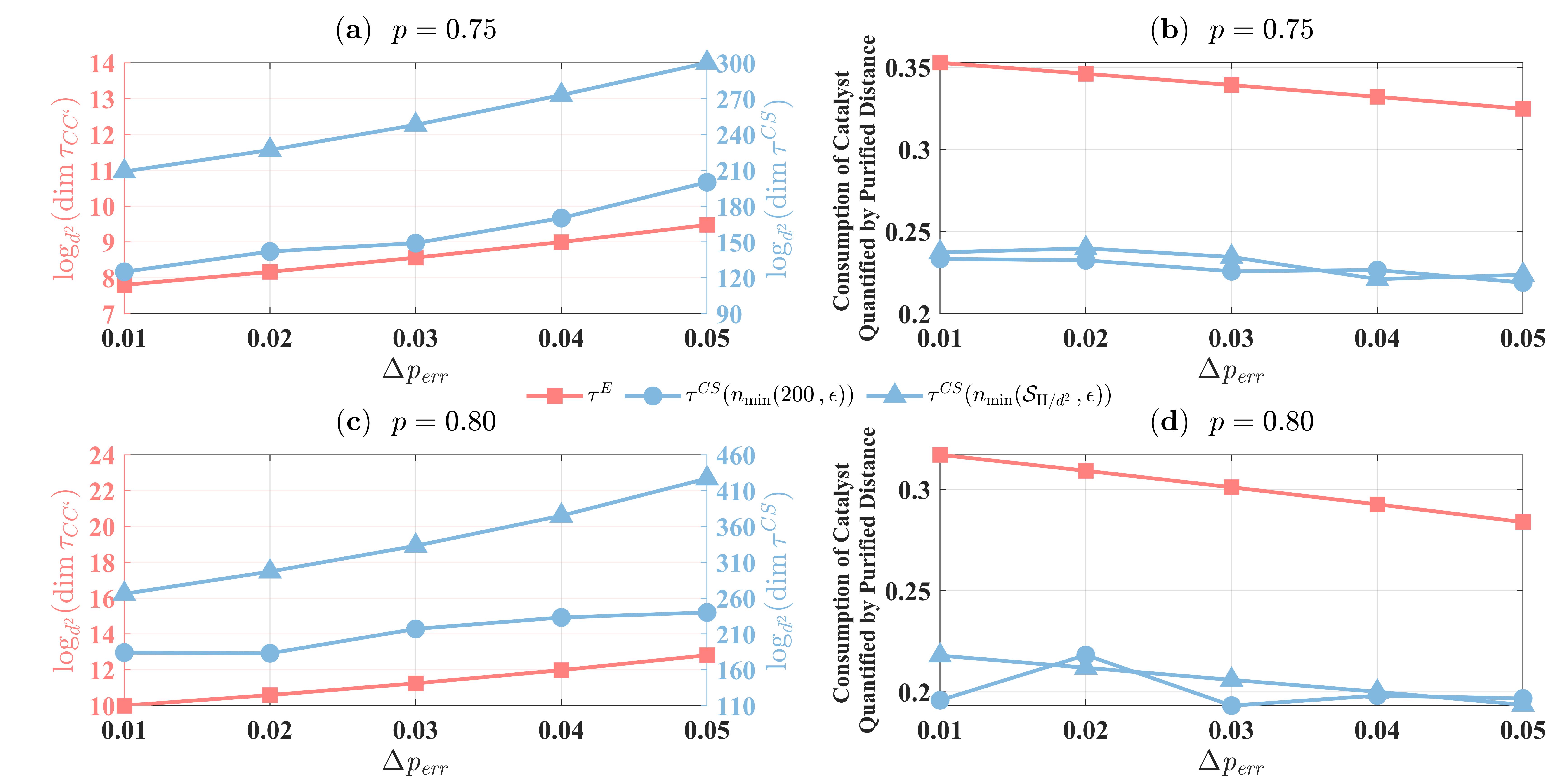}
	\end{center}
\caption{{Comparisons across different catalytic quantum communication protocols.} 
Figures (a) and (c) illustrate the catalyst size needed for achieving performance improvements across different catalytic quantum communication protocols within dephasing channel $\mZ_p$ (see Eq.~\eqref{eq:dephasing}), with noise parameters $p=0.75$ and $p=0.80$. In these figures, $\tau^E$ represents the catalyst dimension based on embezzling states (see Eq.~\eqref{eq:catalyst-embezzling}). In comparison, $\tau^{CS}(n_{\rm min}(200\,, \epsilon))$ and $\tau^{CS}(n_{\rm min}(\mS_{\mI/d^2}\,, \epsilon))$ indicate the catalyst dimensions constructed using the convex-split lemma (see Lem.~\ref{lem:CS}). The former method involves selecting $200$ randomly chosen full-rank states for constructing $\tau$ (see Thm.~\ref{thm:csl}), whereas the latter employs maximally mixed states $\mI/d^2$. Figures (b) and (d) present the usage of embezzling catalysts in communication protocols, evaluated by purified distance. The blue line delineates the consumption for the embezzling catalyst $\tau^{CS}$, as outlined in Eq.~\eqref{eq:consumption-convex}. Meanwhile, the pink line details the usage of the embezzling state $\tau^E$, as specified in Eq.~\eqref{eq:consumption-emb}. These comparisons indicate that embezzling-state protocols surpass convex-split-lemma counterparts in performance at the same catalyst scale, yet they undergo greater changes from their initial state after catalytic quantum communication.
} 
\label{fig:06}
\end{figure*}

With all the preparatory work complete, let us now evaluate the performance of $\mu$ (see Eq.~\eqref{eq:mu}) in approximating $\phi^+_m\otimes \tau^E$, which is characterized by the following inequality
\begin{align}\label{eq:fidelity-embezzling}
F_U\left(\mu, \phi^+_m\otimes \tau^E \right)
=
& F_U\left(\ketbra{11}{11}\otimes\tau^E,\omega\right)\notag\\
=
&\bigg|\bigg\langle
\ket{11}_m\otimes\ket{\tau^E} ,
\ket{\omega}
\bigg\rangle\bigg|^2.
\end{align}
Now the performance of this catalytic quantum communication, measured by its error $p_{\text{err}}^c(\mu)$ (see Eq.~\eqref{eq:capacity}), satisfies 
\begin{align}\label{eq:perr-c}
   p_{err}^c(\mu)
   = 
   1-F_U(\mu, \phi^+_m\otimes \tau^{E})
   \leqslant 
   \epsilon.
\end{align}
In other words, the catalytic channel capacity of the noisy channel $\mN$ is lower bounded by $m$; that is
\begin{align}
	Q_c^{\epsilon}(\mN)
        \geqslant 
        m,
\end{align}
which provides an alternative proof of Thm.~\ref{thm:cqc}.
\end{proof}

Beyond showcasing the enhanced efficacy of the protocol facilitated by embezzling states, our analysis extends to the quantification of embezzling catalyst consumption during catalytic quantum communication. We particularly investigate the alteration in the embezzling catalysts by employing the purified distance metric. To initiate this examination, we consider the resultant state of the catalytic quantum communication, denoted as follows
\begin{align}
    \xi^{E}
    =
    \frac{1}{c_M}\sum^M_{t=1}
         \bigg(\sum^{K-1}_{i=1}\frac{1}{\sqrt{k_it}}\left(\ketbra{ii}{KK}+\ketbra{KK}{ii}\right)&\notag\\
        \quad\quad\quad 
        + \frac{1}{t}\ketbra{KK}{KK}&\bigg),
\end{align}
where $K:=\lceil t/m \rceil$, and $k_i:= t-\lfloor (t-1)/m \rfloor m +(i-1)m$. The degree to which the embezzling state is consumed during catalytic quantum communication is quantifiable by its variation, expressed as follows
\begin{align}
        F_{U}(\xi^{E}, \tau^E)
        = 
        \frac{1}{c^2_M}\sum^M_{t=1}
         \bigg(\sum^{K-1}_{i=1}\frac{2}{\sqrt{ik_itK}}
        + \frac{1}{tK}\bigg).
\end{align}
This yields the precise formulation of the purified distance between the states $\xi^{E}$ and $\tau^E$, specifically
\begin{align}\label{eq:consumption-emb}
        P\left(\xi^{E}, \tau^{E}\right)
        =
        \sqrt{1-\frac{1}{c^2_M}\sum^M_{t=1}
         \bigg(\sum^{K-1}_{i=1}\frac{2}{\sqrt{ik_itK}}
        + \frac{1}{tK}\bigg)}.
\end{align}
Employing Eqs.~\eqref{eq:omega_Fu} and \eqref{eq:fidelity-embezzling} in conjunction with the quantum data processing inequality enables us to deduce a more streamlined upper bound,
\begin{align}
    P\left(\xi^{E}, \tau^{E}\right)
    \leqslant
    \sqrt{2\log_{M} m},
\end{align}
where $M$ represents the Schmidt rank of the embezzling catalyst $\tau^E$.

In embezzling-state-assisted quantum communication, the minimal dimension of the embezzling state, or equivalently, the minimum Schmidt rank $M$, required to ensure that the variation in the embezzling catalyst remains within an acceptable error margin $\delta$ during catalytic quantum communication is determined by
\begin{align}
    M\geqslant m^{\frac{2}{\delta^2}}.
\end{align}

It is worth noting that the unitary operation $U_{AC}$ used to obtain $\mu$ and transform $\omega$ into $\phi^+_m\otimes \tau^E$ is not unique. Another feasible unitary map, along with a schematic diagram of the embezzling-state-assisted protocol, is shown in Fig.~\ref{fig:ES}. So far, we have derived several protocols for catalytic quantum communication. Specifically, for the dephasing channel $\mZ_p$ (see Eq.~\eqref{eq:dephasing}), we compare their performance and outline the dimension requirements for both the convex-split-lemma-assisted and embezzling-state-assisted protocols in Fig.~\ref{fig:06}. In this numerical investigation, we evaluated the efficacy of various protocols by examining the dimensional requirements for catalytic systems and observing the alterations in embezzling catalysts during catalytic quantum communication, quantified by purified distance. Protocols assisted by the convex-split lemma that utilize randomly selected full-rank states to construct $\tau$ (see Eq.~\eqref{eq:tau-cs}) exhibit superior performance over those employing maximally mixed states $\mI/d^2$. Nonetheless, when considering identical dimensions of catalytic systems, the embezzling-state-assisted protocol outperforms convex-split-lemma-assisted protocols. This enhancement in performance is accompanied by a more pronounced variation in catalysts throughout the catalytic quantum communication process.

\begin{figure}
	\begin{center}  
\includegraphics[width=0.47\textwidth]{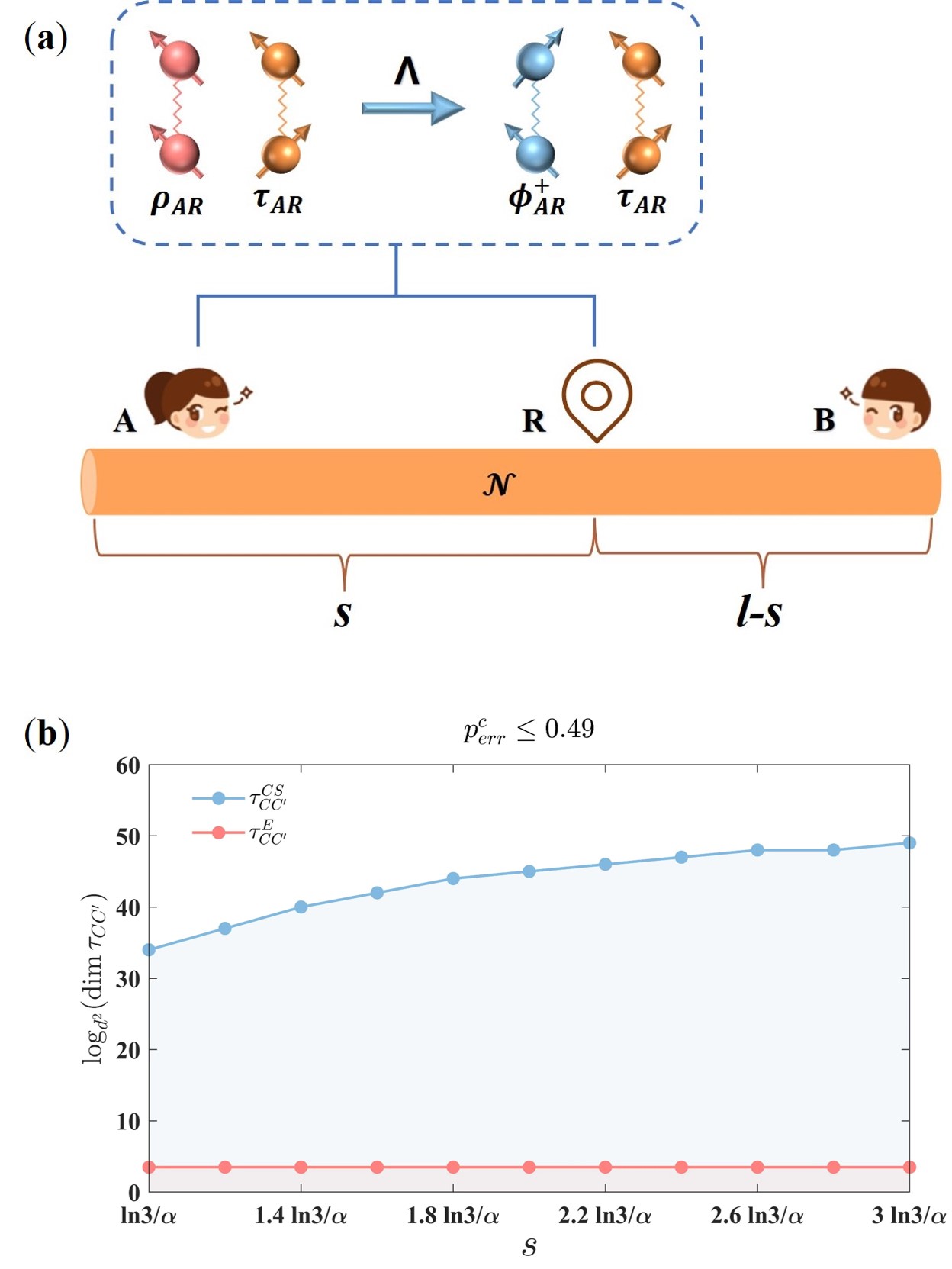}
\end{center}
\caption{{Entanglement distribution.} 
(a) Schematic diagram of entanglement distribution from Alice to Bob through a depolarizing channel (see Eq.~\eqref{eq:depolarizing}), with an intermediate node located at a distance $s$ from Alice and the total distance between Alice and Bob being $l$.
(b) Comparison of the dimension requirements for convex-split-lemma-assisted and embezzling-assisted entanglement distribution protocols.
}
\label{fig:07}
\end{figure}

\subsection{Lower Bound for Catalytic Channel Capacity}
An important feature of catalytic channel capacity $Q_c^{\epsilon}(\mN)$ is that, with the use of embezzling catalysts, we can establish an analytical lower bound that is independent of the specific noisy channel $\mN$. This bound exists due to the presence of universal embezzling catalysts, such as embezzling states $\tau^{E}$ (see Eq.~\eqref{eq:catalyst-embezzling}). Specifically, an analytical lower bound for catalytic channel capacity is described by the following theorem.

\begin{corollary}
[{\bf (Lower Bound for Catalytic Channel Capacity)}]
\label{Thm4}
Given a noisy quantum channel $\mN$ (see Fig.~\ref{fig:cqc}) and an error threshold $\epsilon>0$, if a catalyst of dimension $d_c$ is allowed, then the catalytic channel capacity is lower bounded by
\begin{align}
	Q_c^{\epsilon}(\mN)
        \geqslant 
        (1-\sqrt{1-\epsilon})\log (d_c-1).
\end{align}
\end{corollary}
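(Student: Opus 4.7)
The plan is to obtain the corollary directly from Theorem~\ref{Thm:es-ccc} by re-parameterising its Schmidt-rank condition in terms of the catalyst's dimensional budget. Theorem~\ref{Thm:es-ccc} asserts that as soon as the embezzling state $\tau^E$ in Eq.~\eqref{eq:catalyst-embezzling} has Schmidt rank $M\geqslant\lceil m^{(1-\sqrt{1-\epsilon})^{-1}}\rceil$, there is a catalytic protocol with $Q_c^{\epsilon}(\mN)\geqslant\log m$. Since a bipartite $\tau^E_{CC'}$ of Schmidt rank $M$ fits inside a catalyst system whose single-sided dimension is $d_c=M$, the corollary is just the same trade-off read in the opposite direction: fix $d_c$ and solve for the largest achievable $\log m$.

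Concretely, I would set
\begin{align}
m:=(d_c-1)^{1-\sqrt{1-\epsilon}}.
\end{align}
Then $m^{(1-\sqrt{1-\epsilon})^{-1}}=d_c-1$ is already an integer, so $\lceil m^{(1-\sqrt{1-\epsilon})^{-1}}\rceil=d_c-1$ and the hypothesis of Eq.~\eqref{eq:Schmidt-rank} is met by choosing Schmidt rank $M=d_c-1\leqslant d_c$. Applying Theorem~\ref{Thm:es-ccc} with this $M$ and $m$ immediately yields $Q_c^{\epsilon}(\mN)\geqslant\log m=(1-\sqrt{1-\epsilon})\log(d_c-1)$. I would close by emphasising that because the embezzling state is universal -- independent of $\mN$ -- the same lower bound holds simultaneously for every noisy channel with the given dimensional budget, which is the conceptual payoff of the statement.

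The only delicate step will be the bookkeeping around integrality, since the rate parameter $m$ ought to be a positive integer defining the Schmidt rank of $\phi^+_m$ in Def.~\ref{def:CCC}, whereas $(d_c-1)^{1-\sqrt{1-\epsilon}}$ generally is not. The $-1$ inside the logarithm is precisely the slack that absorbs the rounding: replacing $m$ by $\lfloor m\rfloor$ still satisfies $\lceil\lfloor m\rfloor^{(1-\sqrt{1-\epsilon})^{-1}}\rceil\leqslant d_c$, and any loss in $\log m$ is sub-logarithmic in $d_c$ and hence harmless for the stated inequality. I therefore expect the hard part not to be analytical at all but simply to make this dimension/integrality accounting explicit and to fix the convention under which $d_c$ denotes the Schmidt rank of the catalyst.
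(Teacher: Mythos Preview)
Your proposal is correct and follows essentially the same route as the paper: invoke Theorem~\ref{Thm:es-ccc} and invert the Schmidt-rank condition \eqref{eq:Schmidt-rank} to read off the achievable $\log m$ for a given catalyst dimension. The paper's argument is even more terse---it sets $d_c=\lceil d^{(1-\sqrt{1-\epsilon})^{-1}}\rceil$ and uses $\lceil x\rceil-1<x$ to obtain $(1-\sqrt{1-\epsilon})\log(d_c-1)<\log d\leqslant Q_c^{\epsilon}(\mN)$---so your treatment of the integrality bookkeeping is, if anything, more explicit than the original.
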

\begin{proof}
To prove this result, we employ a protocol that precisely attains the lower bound. Specifically, we employ the embezzling-state-assisted quantum communication described in Thm.~\ref{Thm:es-ccc}, where $\log d$ qubits can be reliably transmitted when $d_c$ takes the following value
\begin{align}
        d_c
        = 
        \lceil d^{(1-\sqrt{1-\epsilon})^{-1}} \rceil,
\end{align}
which immediately implies that
\begin{align}
    \log d
    \geqslant
    (1-\sqrt{1-\epsilon}) \log (d_c-1).
\end{align}
This completes the proof.
\end{proof}

Here, we utilize a specific protocol to establish the lower bound of catalytic channel capacity $Q_c^{\epsilon}(\mN)$. However, we have yet to determine how to formulate the optimal bound that is independent of the specific noisy channel $\mN$. Intuitively, we anticipate that with more information or by allowing the bound to be channel-dependent, a better lower bound could be achieved. Conversely, besides the lower bound presented here, we are also interested in the upper bound for catalytic channel capacity $Q_c^{\epsilon}(\mN)$, which would limit the improvement achievable through the use of catalysts. Unfortunately, none of these questions have straightforward solutions and are complicated by the optimization over LOCC superchannels. Therefore, we leave them for future explorations.

\subsection{Long-Distance Entanglement Distribution}
Long-distance entanglement distribution is a crucial aspect of quantum communication, aiming to establish entangled states between distant parties. This process is fundamental for quantum networks, quantum cryptography, and distributed quantum computing. Ensuring reliable entanglement over large distances involves overcoming significant challenges, including decoherence, photon loss, and other noise effects. In conventional approaches, part of the entangled state is transmitted through a noisy quantum channel, which degrades the quality of the entangled state. The use of correlated catalysts can enhance the performance of entanglement distribution, extending the distance over which high-fidelity entanglement can be maintained~\cite{Datta2024entanglement}. We propose that employing embezzling catalysts can further improve this process, surpassing the limitations of correlated catalysts and enabling even longer-distance entanglement distribution.

In entanglement distribution, Alice aims to share entanglement with a distant party, Bob. To achieve this, she prepares a maximally entangled state of qubit systems, $\phi^+_2$, and sends part of it to Bob through a noisy channel, such as a depolarizing channel $\mN_l$, 
\begin{align} \label{eq:depolarizing}
	\mN_l(\cdot)
        =
        e^{-\alpha l}(\cdot)+(1-e^{-\alpha l})\frac{\mI}{2},
\end{align}
where $ l$ is the length of the channel and $\alpha\geqslant 0$ represents the damping parameter. As the distance of distribution increases, the entanglement weakens. After transmitting the state over a distance $s$ to a point $R$, the resulting bipartite state shared between points $A$ and $R$ is denoted as $\rho_{AR}$, as illustrated in Fig.~\hyperref[fig:07]{7(a)}.

In the conventional approach, where no auxiliary system is considered, the longest distance over which Alice can distribute entanglement is $(\ln 3)/\alpha$. This means that if Bob is located at a distance of at least $(\ln 3)/\alpha$ from Alice, she cannot share entanglement with him after sending part of the system through the depolarizing channel. In the correlated-catalyst-assisted protocol, if a correlated catalyst is employed at the point $R$ with $s < l/2 < (\ln 3)/\alpha$, then for Bob located within $l<(2\ln 3)/\alpha$, Alice can still distribute entanglement via the depolarizing channel~\cite{Datta2024entanglement}. However, if Alice and Bob are separated by a distance greater than $(2\ln 3)/\alpha$, entanglement distribution will fail, even with the help of correlated catalysts. With embezzling catalysts, we can surpass this limit and achieve longer entanglement distribution. To be more specific, employing either the convex-split-lemma-assisted protocol, which is based on $200$ randomly generated full-ranked states, or the embezzling-state-assisted protocol, allows Alice to establish entanglement with Bob across spans exceeding $(2\ln 3)/\alpha$. It is worth mentioning that the dimension of the catalysts varies depending on the intermediate node's distance from Alice. Numerical experiments presented in Fig.~\hyperref[fig:07]{7(b)} compare the performance of these two different catalytic protocols. Notably, the embezzling-state-assisted protocol stands out for achieving equivalent performance with fewer dimensions in the catalytic systems, thus indicating superior efficiency.


\section{Catalytic Classion Information Transmission}\label{SEC:IV}
Until now, our protocols, along with those discussed in existing literature, have focused on enhancing the transmission of quantum information using catalysts. Yet, we are captivated by the following question: Can catalysts also improve the efficiency of transmitting classical information? The answer is affirmative. Here, we explore superdense coding, demonstrating how catalysts can improve its performance. By doing so, we address a previously overlooked aspect of catalytic communication, underscoring that quantum catalysts hold potential to enhance the transmission of both quantum and classical information.

\begin{figure}[t]
\begin{center}  
\includegraphics[width=0.47\textwidth]{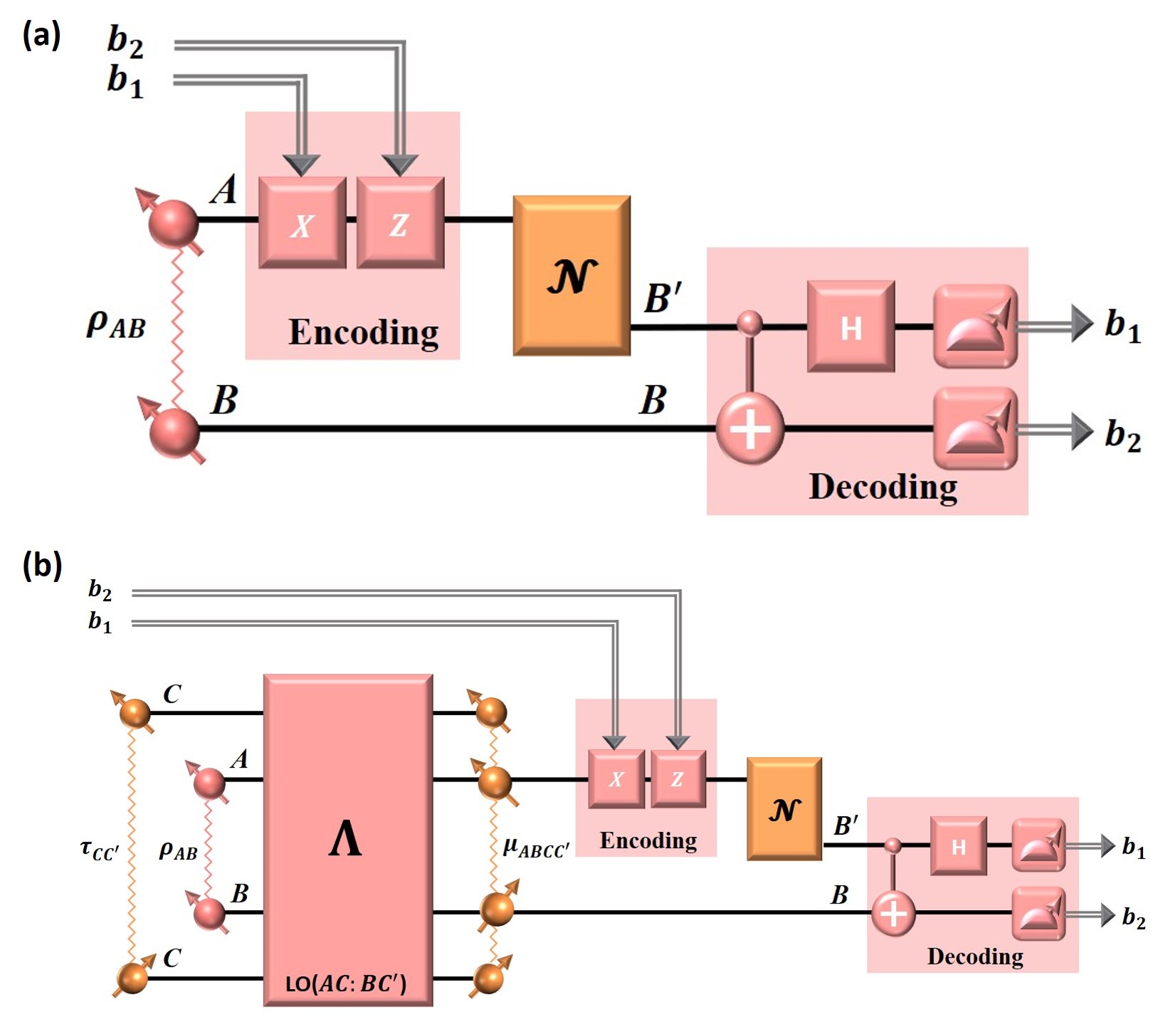}
\end{center}
\caption{{Superdense coding.}
In standard superdense coding (a), Alice encodes the classical information $(b_1, b_2)$ using the local gate $Z^{b_2}X^{b_1}$ and sends her qubit to Bob. Assisted by the entangled state $\rho_{AB}$ shared between them, Bob can decode the classical information via Bell measurements.
In catalytic superdense coding (b), Alice and Bob utilize an additional catalyst $\tau_{CC^{'}}$. After applying local operations, the catalytic systems remain almost unchanged, but the quality of entanglement is enhanced, leading to improved performance of superdense coding.
}
\label{fig:sdc}
\end{figure}

Superdense coding exemplifies the use of quantum resources, specifically quantum entanglement, to transmit classical information~\cite{651037,PhysRevLett.83.3081,1035117,10.1063/1.1495877}, where two bits of classical information are transmitted using a pre-shared entangled state $[qq]$ along with a single application of a noiseless qubit channel $[q\to q]$. The process can be succinctly summarized as the following inequality
\begin{align}
       \langle qq\rangle + [q\to q] 
       \geqslant 
       2[ c \to c].    
\end{align}
Consider the qubit scenario as an illustration: Alice and Bob initially share a maximally entangled state. Alice then applies one of four possible Pauli gates to her qubit. Subsequently, Alice transmits her particle to Bob through a noiseless quantum channel. Upon reception, Bob performs Bell measurements. By examining the classical outcome of these measurements, Bob can read out the message transmitted by Alice. A schematic diagram illustrating this process is depicted in Fig.~\hyperref[fig:sdc]{8(a)}.

In realistic superdense coding protocols, generic entangled states are typically employed, leading to inevitable noise in the state transmitted by Alice. Instead of the ideal maximally entangled state $\langle qq\rangle$, a noisy bipartite state $\rho_{AB}$ is shared between Alice and Bob, as depicted in Fig.~\hyperref[fig:sdc]{8(b)}. Unlike other catalytic quantum communication protocols, such as teleportation or others, where LOCC are assumed to be cost-free, the situation here is different. In superdense coding, the objective is to simulate classical communications. Therefore, direct classical communication between Alice and Bob should not be permitted; otherwise, superdense coding would become trivialized. Consequently, the set of allowed operations should be limited to collections of all local operations, i.e., LO. For correlated catalysts, well-known options like Duan's state cannot be used for catalytic superdense coding. Similarly, for embezzling catalysts, protocols like the convex-split-lemma-assisted protocol cannot be employed due to the requirement of LOCC. However, the embezzling-state-assisted protocol remains viable, as only local operations are necessary. This allows us to employ local operations $\mE\in\text{LO}(AC:BC^{'})$ on $\rho_{AB}\otimes\tau^{E}_{CC^{'}}$ to enhance the quality of entanglement shared between Alice and Bob, thereby further improving the performance of superdense coding. For further insight, please refer to Eq.~\eqref{eq:catalyst-embezzling}, which details the construction of the embezzling state. It's worth noting that while correlated catalysts like Duan's state cannot be used for superdense coding, the potential existence of other correlated catalysts that could enhance superdense coding remains an open question.

For a noisy entangled state $\rho_{AB}$, the effectiveness of superdense coding is assessed through its superdense coding capacity $C(\rho_{AB})$, which is defined as follows~\cite{TohyaHiroshima_2001,PhysRevLett.93.210501}
\begin{align}\label{eq:classcapacity}
    C(\rho_{AB}) = \log d - H(A|B),    
\end{align}
where $H(A|B):= H(\rho_{AB})- H(\rho_{B})$ stands for the quantum conditional entropy, with $H$ being the von Neumann entropy for quantum states. Operator $\rho_{B}$ represents the reduce state of $\rho_{AB}$, i.e., $\rho_{B}:= \Tr_{A}[\rho_{AB}]$. For simplicity, we assume that the dimension of system $A$ equals the dimension of system $B$, denoted by $d$. After undergoing the catalytic process (see Fig.~\hyperref[fig:ES]{5(a)}) involving the embezzling state $\tau^{E}$ (see Eq.~\eqref{eq:catalyst-embezzling}), the state acting on systems $AB$ transforms to
\begin{align}
    \rho^E_{AB}:&= \Tr_{CC'}[\mu_{ABCC'}]\notag\\
    &= \frac{1}{c_M}\sum^d_{\stackrel{{\scriptstyle i=1}}{j=1}}\sum^{K}_{k=0}\frac{1}{\sqrt{(i+kd)(j+kd)}}\ketbra{ii}{jj}_{AB},
\end{align}
where $K:= \lfloor (M-\max \{i,j\})/d \rfloor$, and $M$ represents the Schmidt rank of the embezzling state, which affects the performance of catalytic superdense coding. Numerical experiments demonstrating the enhancement of superdense coding capacity through the use of embezzling states with different Schmidt ranks $M$ are shown in Fig.~\ref{fig:superdense-capacity}. The results indicate that when the dimension of the embezzling state is sufficiently large, the superdense coding capacity approaches $2\log d$ (see Fig.~\hyperref[fig:superdense-capacity]{9(b)}). In low-dimensional systems, improvements are also observed; however, this comes at the cost of significant changes to the catalytic systems (see Fig.~\hyperref[fig:superdense-capacity]{9(a)}).

\begin{figure}
    \centering
    \includegraphics[width=0.47\textwidth]{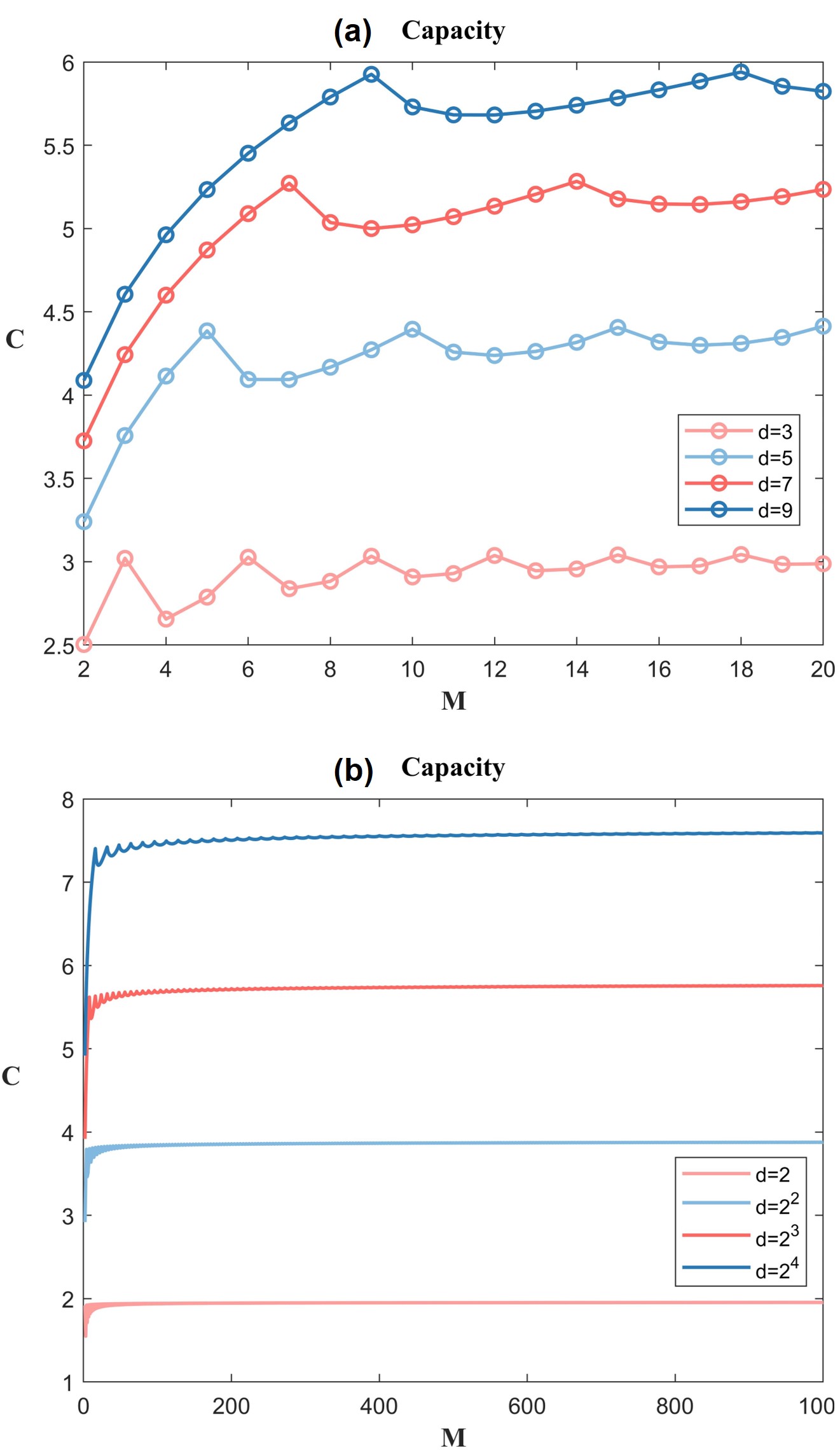}
    \caption{{Catalytic superdense coding capacity.}  
    In (a), we present a numerical illustration of the catalytic superdense coding capacity, as defined in Eq.~\eqref{eq:classcapacity}, for various values of $M$ and $d$. In (b), as the dimension increases, the capacity converges towards the theoretical upper bound of $2\log d$.}
    \label{fig:superdense-capacity}
\end{figure}

\section{Discussion}\label{sec:conclusion}
In this work, we have introduced communication with quantum catalysts, specifically focusing on an additional entangled resource known as an embezzling catalyst. This resource enhances the catalytic capacity of quantum communication through noisy channels and optimizes the performance of superdense coding. Unlike exact catalysts, embezzling catalysts experience only negligible changes during the communication process, allowing them to be recycled for subsequent use. By easing the constraint of preserving catalysts throughout information processing, we can develop universal catalysts. These catalysts enable the transmission of quantum information regardless of the noisy channel conditions in quantum communication and facilitate superdense coding irrespective of the entangled state shared between the sender and receiver in classical communication. To demonstrate our scheme, we constructed embezzling catalysts using two distinct methods: the convex-split lemma~\cite{PhysRevLett.119.120506} and the embezzling state~\cite{PhysRevA.67.060302}. For quantum information transmission, both methods successfully achieve a non-zero catalytic channel capacity with finite-dimensional embezzling catalysts. For classical information transmission, only the embezzling-state-assisted protocol can be employed, as direct classical communication between parties is not allowed.

The phenomenon of embezzling in quantum information theory refers to the remarkable ability to extract resources, such as entanglement, from a reference state without significantly altering it. This is akin to taking a cup of water from the ocean and leaving the ocean nearly unchanged. However, the dimensionality of the reference state, or catalyst, is crucial for this process. High-dimensional systems pose challenges in preparing, distributing, and manipulating entangled states precisely. Consequently, low-dimensional catalysts are necessary. In this work, we explore methods to reduce the dimensionality of embezzling catalysts using the convex-split lemma. We demonstrate the dimensional requirements for both convex-split-lemma-assisted and embezzling-state-assisted protocols in quantum communication. Detailed numerical experiments are also provided to support our theoretical results.

Catalytic channel capacity is a critical metric for assessing the capability of transmitting quantum information through a noisy quantum channel. Superdense coding, on the other hand, is a fundamental information task that illustrates how entanglement can enhance the transmission of classical information. Here, we have demonstrated that embezzling catalysts can enhance both quantum and classical information transmission. In quantum communications, correlated catalysts can improve catalytic channel capacity. However, it remains an open question whether similar enhancements can be achieved for superdense coding. The conventional approach, such as using Duan's state, does not apply to superdense coding because classical communication between parties is not permitted. Beyond the tasks investigated here, can embezzling catalysts enhance other communication tasks? The answer is affirmative. In our accompanying paper~\cite{CP}, we explore their roles in quantum teleportation, demonstrating the broader applicability of embezzling catalysts in quantum communications.
\\
\section*{Acknowledgments}
We would like to thank Xiao-Min Hu, and Yu Guo for fruitful discussions. This research is supported by the National Research Foundation, Singapore and A*STAR under its Quantum Engineering Programme (NRF2021-QEP2-02-P03), and by A*STAR under its Central Research Funds and C230917003. Yuqi Li, Junjing Xing, Zhaobing Fan, and Haitao Ma are Supported by the Stable Supporting Fund of National Key Laboratory of Underwater Acoustic Technology (KY12400210010). Zhu-Jun Zheng is supported by the Key Lab of Guangzhou for Quantum Precision Measurement (Grant No. 202201000010), the Guangdong Basic and Applied Basic Research Foundation
(Grant No. 2019A1515011703), the Key Research and Development Project of Guangdong Province (Grant No. 2020B0303300001), and the Guangdong Basic and Applied Basic Research Foundation (Grant No. 2020B1515310016).  Dengke Qu, Lei Xiao, and Peng Xue are supported by the National Key R\&D Program of China (Grant No. 2023YFA1406701), the National Natural Science Foundation of China (Grant Nos. 12025401, 92265209, 12104036, and 12305008), and the China Postdoctoral Science Foundation (Grant Nos. BX20230036 and 2023M730198).

\bibliography{Bib}

\end{document}